\documentclass[10pt,notitlepage,tightenlines,nofootinbib,superscriptaddress,aps,pra,twocolumn]{revtex4-2}
\pdfoutput=1

\usepackage{newpxtext,newpxmath}

\usepackage[latin1]{inputenc}
\usepackage{amsthm}
\usepackage{amssymb}
\usepackage{amsmath}
\usepackage{bbold}
\usepackage{bbm}
\usepackage[dvipsnames,svgnames]{xcolor}
\usepackage[pdftex,backref=page]{hyperref}
\hypersetup{
    bookmarksnumbered=true, 
    breaklinks=true,
    unicode=false, 
    pdfstartview={FitH}, 
    pdfnewwindow=true, 
    colorlinks=true, 
    allcolors=MidnightBlue!70!black!70!TealBlue,
}
\usepackage{braket}
\usepackage{dsfont}
\usepackage{mathdots}
\usepackage{mathtools}
\usepackage{enumerate}
\usepackage[shortlabels]{enumitem}
\usepackage{csquotes}
\usepackage{stmaryrd}
\usepackage[cal=boondox]{mathalfa}
\usepackage{graphicx}
\usepackage{stackengine}
\usepackage{scalerel}
\usepackage{tensor}       
\usepackage{array}
\usepackage{makecell}
\newcolumntype{x}[1]{>{\centering\arraybackslash}p{#1}}
\usepackage{tikz}
\usepackage{pgfplots}
\usetikzlibrary{shapes.geometric, shapes.misc, positioning, arrows, arrows.meta, decorations.pathreplacing, decorations.pathmorphing, patterns, angles, quotes, calc}
\usepackage{booktabs}
\usepackage{xfrac}
\usepackage{siunitx}
\usepackage{centernot}
\usepackage{comment}
\usepackage{chngcntr}
\usepackage[justification=justified,format=plain]{caption}
\usepackage{subcaption}
\usepackage{ragged2e}

\newtheorem{thm}{Theorem}
\newtheorem*{thm*}{Theorem}
\newtheorem{prop}[thm]{Proposition}
\newtheorem*{prop*}{Proposition}
\newtheorem{lemma}[thm]{Lemma}
\newtheorem*{lemma*}{Lemma}
\newtheorem{cor}[thm]{Corollary}
\newtheorem*{cor*}{Corollary}

\newtheorem*{cj*}{Conjecture}

\newtheorem*{Def*}{Definition}

\newtheorem*{question*}{Question}

\newtheorem*{problem*}{Problem}

\makeatletter
\def\thmhead@plain#1#2#3{%
  \thmname{#1}\thmnumber{\@ifnotempty{#1}{ }\@upn{#2}}%
  \thmnote{ {\the\thm@notefont#3}}}
\let\thmhead\thmhead@plain
\makeatother

\theoremstyle{definition}

\makeatletter
\newcommand{\manualifempty}[3]{%
  \edef\@tempa{#1}%
  \ifx\@tempa\@empty
    #2
  \else
    #3
  \fi
}
\makeatother

\makeatletter
\newtheoremstyle{manualstyle}
  {3pt}{3pt}{\itshape}{}{\bfseries}{.}{ }{}

\theoremstyle{manualstyle}

\newtheorem{manualthminner}{Theorem}
\newenvironment{manualthm}[1]{%
  \def\@tempa{#1}%
  \ifx\@tempa\@empty
    \renewcommand{\themanualthminner}{\unskip}%
  \else
    \renewcommand{\themanualthminner}{#1}%
  \fi
  \begin{manualthminner}
}{\end{manualthminner}}

\newtheorem{manualpropinner}{Proposition}

\newtheorem{manuallemmainner}{Lemma}

\newtheorem{manualcorinner}{Corollary}

\makeatother

\newcommand{\bb}{\begin{equation}\begin{aligned}\hspace{0pt}}
\newcommand{\bbb}{\begin{equation*}\begin{aligned}}
\newcommand{\ee}{\end{aligned}\end{equation}}
\newcommand{\eee}{\end{aligned}\end{equation*}}
\newcommand\floor[1]{\lfloor#1\rfloor}

\newcommand{\eqt}[1]{\stackrel{\mathclap{\scriptsize \mbox{#1}}}{=}}
\newcommand{\leqt}[1]{\stackrel{\mathclap{\scriptsize \mbox{#1}}}{\leq}}

\newcommand{\geqt}[1]{\stackrel{\mathclap{\scriptsize \mbox{#1}}}{\geq}}
\newcommand{\ketbra}[1]{\ket{#1}\!\!\bra{#1}}

\newcommand{\sumno}{\sum\nolimits}

\newcommand{\e}{\varepsilon}
\renewcommand{\epsilon}{\varepsilon}

\newcommand{\dd}{\mathrm{d}}

\newcommand{\id}{\mathds{1}}
\newcommand{\R}{\mathds{R}}
\newcommand{\N}{\mathds{N}}

\newcommand{\C}{\mathds{C}}
\newcommand{\E}{\mathds{E}}

\newcommand{\ve}{\varepsilon}

\newcommand{\sep}{\mathrm{SEP}}

\let\SEP\sep

\DeclareMathOperator{\Tr}{Tr}

\DeclareMathOperator{\co}{conv}

\DeclareMathAlphabet{\pazocal}{OMS}{zplm}{m}{n}

\newcommand{\iid}{\mathrm{iid}}

\newcommand{\HH}{\pazocal{H}}

\newcommand{\EE}{\pazocal{E}}

\newcommand{\D}{\pazocal{D}}

\newcommand{\XX}{\pazocal{X}}

\newcommand{\FF}{\pazocal{F}}

\newcommand{\lsmatrix}{\left(\begin{smallmatrix}}
\newcommand{\rsmatrix}{\end{smallmatrix}\right)}

\newcommand{\deff}[1]{\textbf{\emph{#1}}}
\newcommand{\rel}[3]{#1\big(#2\,\big\|\,#3\big)}
\newcommand{\Rel}[3]{#1\Big(#2\,\Big\|\,#3\Big)}

\stackMath
\newcommand\xxrightarrow[2][]{\mathrel{%
  \setbox2=\hbox{\stackon{\scriptstyle#1}{\scriptstyle#2}}%
  \stackunder[5pt]{%
    \xrightarrow{\makebox[\dimexpr\wd2\relax]{$\scriptstyle#2$}}%
  }{%
   \scriptstyle#1\,%
  }%
}}

\newcommand{\tends}[2]{\xxrightarrow[\! #2 \!]{\mathrm{#1}}}
\newcommand{\tendsn}[1]{\xxrightarrow[\! n\rightarrow \infty\!]{\mathrm{#1}}}

\stackMath

\makeatletter
\newcommand*\rel@kern[1]{\kern#1\dimexpr\macc@kerna}
\newcommand*\widebar[1]{%
  \begingroup
  \def\mathaccent##1##2{%
    \rel@kern{0.8}%
    \overline{\rel@kern{-0.8}\macc@nucleus\rel@kern{0.2}}%
    \rel@kern{-0.2}%
  }%
  \macc@depth\@ne
  \let\math@bgroup\@empty \let\math@egroup\macc@set@skewchar
  \mathsurround\z@ \frozen@everymath{\mathgroup\macc@group\relax}%
  \macc@set@skewchar\relax
  \let\mathaccentV\macc@nested@a
  \macc@nested@a\relax111{#1}%
  \endgroup
}

\counterwithin*{equation}{part}
\counterwithin*{thm}{part}
\counterwithin*{figure}{part}

\tikzset{meter/.append style={draw, inner sep=10, rectangle, font=\vphantom{A}, minimum width=30, line width=.8, path picture={\draw[black] ([shift={(.1,.3)}]path picture bounding box.south west) to[bend left=50] ([shift={(-.1,.3)}]path picture bounding box.south east);\draw[black,-latex] ([shift={(0,.1)}]path picture bounding box.south) -- ([shift={(.3,-.1)}]path picture bounding box.north);}}}
\tikzset{roundnode/.append style={circle, draw=black, fill=gray!20, thick, minimum size=10mm}}
\tikzset{squarenode/.style={rectangle, draw=black, fill=none, thick, minimum size=10mm}}

\definecolor{Blues5seq1}{RGB}{239,243,255}
\definecolor{Blues5seq2}{RGB}{189,215,231}
\definecolor{Blues5seq3}{RGB}{107,174,214}
\definecolor{Blues5seq4}{RGB}{49,130,189}
\definecolor{Blues5seq5}{RGB}{8,81,156}

\definecolor{Greens5seq1}{RGB}{237,248,233}
\definecolor{Greens5seq2}{RGB}{186,228,179}
\definecolor{Greens5seq3}{RGB}{116,196,118}
\definecolor{Greens5seq4}{RGB}{49,163,84}
\definecolor{Greens5seq5}{RGB}{0,109,44}

\definecolor{Reds5seq1}{RGB}{254,229,217}
\definecolor{Reds5seq2}{RGB}{252,174,145}
\definecolor{Reds5seq3}{RGB}{251,106,74}
\definecolor{Reds5seq4}{RGB}{222,45,38}
\definecolor{Reds5seq5}{RGB}{165,15,21}

\allowdisplaybreaks

\usepackage[most,breakable]{tcolorbox}
\makeatletter
\renewenvironment{boxed}{\begingroup\@ifnextchar\bgroup\boxed@gobegin\boxed@gobegin@empty}{\end{tcolorbox}\endgroup}
\def\boxed@gobegin#1{\def\@tempa{#1}\def\@tempb{orange}\ifx\@tempa\@tempb\begin{tcolorbox}[colback=red!15,colframe=orange!70,breakable,enhanced]\else\begin{tcolorbox}[colback=Blues5seq1,colframe=Blues5seq5,breakable,enhanced]\fi}
\def\boxed@gobegin@empty{\begin{tcolorbox}[colback=Blues5seq1,colframe=Blues5seq5,breakable,enhanced]}
\makeatother

\makeatletter 
\renewcommand\onecolumngrid{
\do@columngrid{one}{\@ne}%
\def\set@footnotewidth{\onecolumngrid}
\def\footnoterule{\kern-6pt\hrule width 1.5in\kern6pt}%
}
\makeatother

\DeclareMathOperator{\stein}{Stein}

\newcommand{\RR}{\pazocal{R}}
\newcommand{\LL}{\pazocal{L}}
\newcommand{\YY}{\pazocal{Y}}

\renewcommand{\SS}{\pazocal{S}}
\renewcommand{\AA}{\pazocal{A}}
\newcommand{\BB}{\pazocal{B}}

\newcommand{\xbar}{\widebar{x}}

\begin{document}

\title{Universal quantum resource distillation\\
via composite 
generalised quantum Stein's lemma}

\author{Ludovico Lami}
\email{ludovico.lami@gmail.com}
\affiliation{Scuola Normale Superiore, Piazza dei Cavalieri 7, 56126 Pisa, Italy}

\author{Bartosz Regula}
\email{bartosz.regula@gmail.com}
\affiliation{Mathematical Quantum Information RIKEN Hakubi Research Team, RIKEN Pioneering Research Institute (PRI) and RIKEN Center for Quantum Computing (RQC), Wako, Saitama 351-0198, Japan}

\author{Ryuji Takagi}
\email{ryujitakagi@g.ecc.u-tokyo.ac.jp}
\affiliation{Department of Basic Science, The University of Tokyo, Tokyo 153-8902, Japan}

\begin{abstract}
The performance of quantum resource manipulation protocols, including key examples such as distillation of quantum entanglement, is measured in terms of the rate at which desired target states can be produced from a given noisy state. However, to achieve optimal rates, known protocols require precise tailoring to the quantum state in question, demanding a perfect knowledge of the input and allowing no errors in its preparation.
Here we show that distillation of quantum resources in the framework of resource non-generating operations can be performed \emph{universally}: optimal rates of distillation can be achieved with no knowledge of the input state whatsoever, certifying the robustness of quantum resource distillation. The findings apply in particular to the purification of quantum entanglement under non-entangling maps, where the optimal rates are governed by the regularised relative entropy of entanglement. Our result relies on an extension of the generalised quantum Stein's lemma in quantum hypothesis testing to a composite setting where the null hypothesis is no longer a fixed quantum state, but is rather composed of i.i.d.\ copies of an unknown state. The solution of this asymptotic problem is made possible through new developments in one-shot quantum information and a refinement of the blurring technique from \href{https://ieeexplore.ieee.org/abstract/document/10898013}{[Lami, IEEE T-IT 71, 4454 (2025)]}. 
\end{abstract}

\maketitle

\section*{Introduction}

\subsection*{Quantum resources and hypothesis testing}

The theory of quantum information is founded on operational tasks. Two particularly important classes of such tasks are quantum resource manipulation, which consists in transforming many copies of a given resourceful object (e.g.\ a quantum state or channel) into as many copies as possible of another object, as well as 
quantum hypothesis testing, which consists in discriminating between alternative assumptions about the nature of an unknown quantum object (typically, a state or a channel).
The performance of such tasks is benchmarked by their optimal asymptotic rates, which can be understood as their average efficiency when more and more copies of a given quantum state or channel become available. 
It is in this context that quantum entropies often emerge and provide a unifying quantitative picture. As it turns out, however, this connection can be much deeper, with the two classes of problems inextricably connected.

The formal treatment of quantum hypothesis testing was initiated by Hiai and Petz~\cite{Hiai1991}, who studied the problem of deciding whether an unknown state over an $n$-copy quantum system is $\rho^{\otimes n}$ or $\sigma^{\otimes n}$, for some known single-system density matrices $\rho$ and $\sigma$. The asymptotic characterisation of the error probabilities involved in this problem in the asymptotic limit $n\to\infty$, known as quantum Stein's lemma, led Hiai and Petz to uncovering the fundamental role played in this context by the quantum relative entropy $D(\rho\|\sigma) \coloneqq \Tr \big[\rho (\log \rho - \log \sigma) \big]$, previously introduced by Umegaki~\cite{Umegaki1962}. 

The paradigmatic example of quantum resource manipulation, on the other hand, is entanglement distillation. It is concerned with transforming copies of a weakly entangled state into as much pure entanglement as possible. The ultimate efficiency of this task, known as distillable entanglement, was first connected with entropic functions by Bennett et al.~\cite{Bennett-distillation}, who showed that the rate at which entanglement of a pure state can be distilled is given by its entropy of entanglement.
Such entropic connections extend in fact to entanglement manipulation problems beyond distillation~\cite{Bennett-error-correction}.
Later generalisations of entropic entanglement measures to mixed states led to the definition of the relative entropy of entanglement~\cite{Vedral1997, Vedral1997-PRA, Vedral1998}, with some connections to hypothesis testing problems already observed in these early works.



This line of work culminated with the formulation of a theory of asymptotic entanglement manipulation based on a fundamental connection with quantum hypothesis testing~\cite{BrandaoPlenio1, BrandaoPlenio2}. This framework, relying on the axiomatically motivated class of asymptotically non-entangling operations (ANE), exhibited a long-sought distinguishing feature of a \emph{unique} measure of entanglement, namely the regularised relative entropy of entanglement $D^\infty(\rho\|\sep)$ based on the Umegaki relative entropy. The central role played by this quantity is encapsulated by the fact that it fully determines all asymptotic transformation rates: for any two states $\rho$ and $\sigma$, the rate of converting the former into the latter is given by $R_{\mathrm{ANE}}(\rho\to \sigma) = \frac{D^\infty(\rho\|\sep)}{D^\infty(\sigma\|\sep)}$. When $\sigma = \Phi_+$ is the maximally entangled state, this gives the distillable entanglement under ANE operations. The same ideas can be extended beyond entanglement, to all quantum resource theories~\cite{Brandao2010,Brandao-Gour} whose corresponding sets of free states~\cite{RT-review}, denoted as $\SS$, obey some basic assumptions, jointly referred to as the `Brand\~ao--Plenio axioms'. These resource theories are all asymptotically reversible under asymptotically resource non-generating operations (ARNG), with the regularised relative entropy of the corresponding resource, denoted $D^\infty(\cdot\|\SS)$, working as the linchpin of the whole theory: similarly to the case of entanglement, all transformation rates can be calculated as $R_{\mathrm{ARNG}}(\rho\to \sigma) = \frac{D^\infty(\rho\|\SS)}{D^\infty(\sigma\|\SS)}$.


The Brand\~ao--Plenio theory of quantum resource manipulation crucially depends on an underlying quantum hypothesis testing problem~\cite{BrandaoPlenio2}, in the sense that the achievability of the above transformation rate relies on a corresponding achievability statement in quantum state discrimination. The problem is the following: given a state $\rho$ on some quantum system and an unknown state $\omega_n$ on $n$ copies of that same system, one needs to find a sequence of tests that optimally discriminates between a null hypothesis $\omega_n = \rho^{\otimes n}$ and an alternative hypothesis $\omega_n\in \SS$ (i.e.\ $\omega_n$ is a free state). Here, `optimally' means that the probability of a type I error (i.e.\ mistaking $\rho^{\otimes n}$ for a free state) can be kept below a given threshold $\e$ while the probability of a type II error (mistaking a free state for $\rho^{\otimes n}$), denoted $\rel{\beta_\e}{\rho^{\otimes n}}{\SS}$, can be made to decay to zero exponentially at the maximal rate given by the regularised relative entropy of the resource: formally, $\rel{\beta_\e}{\rho^{\otimes n}}{\SS} \sim 2^{-n D^\infty(\rho\|\SS)}$. In words, we say that the regularised relative entropy of entanglement represents the Stein exponent of this task; formally, $\stein(\rho\|\SS) = D^\infty(\rho\|\SS)$.
The possibility of constructing the above tests, which directly leads to the asymptotic reversibility of the corresponding quantum resource theory, is known as the generalised quantum Stein's lemma. A proof of this statement was first claimed by Brand\~ao and Plenio themselves~\cite{Brandao2010}, but this was later found to contain a fatal gap~\cite{gap, gap-comment}. The issue was finally settled when two complete solutions of the problem appeared~\cite{Hayashi-Stein, GQSL}.

\subsection*{Summary of results}

From a practical standpoint, a major drawback of this framework is that exact knowledge of the state $\rho$ is required to design the above tests, and, accordingly, also to construct the asymptotic resource non-generating operations that transform copies of $\rho$ into copies of some other target state $\sigma$ at optimal rate. This is physically unrealistic, as perfect knowledge of the source state is never available. Thus, if we intend to apply these results to operationally relevant settings, it is imperative that we demonstrate how to design resource tests and the resulting transformations in a way that is independent of the exact description of underlying source state. 

In this work, we accomplish precisely that. We show, for any given target state $\sigma$, the existence of a \emph{universal} sequence of ARNG protocols that distills it: for any input state $\rho$, without any prior information about the state,\footnote{`Universal' here means that the protocol only depends on the target state $\sigma$ and not on the input state $\rho$; of course, any such protocol must depend on the target state, as the operational problem would not make much sense otherwise.} the protocol can transform copies of $\rho$ into copies of $\sigma$ at an optimal rate of
\bb
R_{\mathrm{ARNG}}^{\mathrm{u}}(\rho\to \sigma) = \frac{D^\infty(\rho\|\SS)}{D^\infty(\sigma\|\SS)}\, .
\ee
A remarkable feature of the result is that it establishes that the best-case scenario is realised. Indeed, the highest transformation rate that we could hope for, even with a full knowledge of the input state $\rho$, is precisely $\frac{D^\infty(\rho\|\SS)}{D^\infty(\sigma\|\SS)}$; our contribution is to prove that this bound is achieved even with no prior information about the state whatsoever.
In the particularly important case of entanglement distillation, the result applies with the choice of non-entangling operations (NE), showing that there exists a universal entanglement distillation protocol that converts any noisy entangled state $\rho=\rho_{AB}$ into singlets $\Phi_+$ at a rate given by
\bb
R_{\mathrm{NE}}^{\mathrm{u}}(\rho\to \Phi_+) = D^\infty(\rho\|\sep)\, .
\ee

Variants of such universal protocols were previously studied in other quantum information processing tasks~\cite{Jozsa1998,hayashi_2002-3,Josza2003,bennett_2006,hayashi_2009-2,bjelakovic_2013,matsuura2025asymptoticallytightsecurityanalysis,matsuura2025universalclassicalquantumchannelresolvability} and in the limited case of pure-state entanglement transformations~\cite{matsumoto_2007}. More recently, universal resource manipulation protocols have been introduced in the resource theory of thermodynamics~\cite{watanabe_2024,watanabe_2026,faist_2026}. 
However, they have not previously been applied to more general resource theories such as quantum entanglement of arbitrary mixed states, with a major technical obstruction being an insufficient generality of the known results in quantum hypothesis testing --- the tests required for universal resource transformations cannot be directly obtained from the methods used in the original conjecture of~\cite{Brandao2010}, or from the recent solutions of the generalised quantum Stein's lemma~\cite{Hayashi-Stein, GQSL}, or from the subsequent advances in composite quantum hypothesis testing spurred by these solutions~\cite{generalised-Sanov,Hayashi-Sanov-2,Fang2025,fang_2025-1,doubly-comp-quantum}.


To establish the result, we thus tackle and resolve a class of hypothesis testing problems that extends the generalised quantum Stein's lemma. To understand the difficulty in constructing resource transformations in this way, consider a natural way to perform distillation from a completely unknown state: first, collect some statistics of the unknown quantum state in order to obtain its estimate; then, try to use the Brand\~ao--Plenio framework~\cite{BrandaoPlenio2} to perform an asymptotically non-entangling transformation based on the estimate. The key problem is that performing full state tomography would use up a large number of copies of the source state, hindering the efficiency of the transformation itself. An effort to avoid a loss of performance confines us to a weaker estimate of the state --- say, $\rho \in \RR_1$, where the set $\RR_1$ can be understood as a set of quantum states that are compatible with our experimental data on the source, obtained by sacrificing a small fraction of it.
We are then faced the following hypothesis testing problem: given an unknown state $\omega_n$ over $n$ copies of a quantum system, distinguish between 
\begin{itemize}[leftmargin=10pt]
\item Null hypothesis: $\omega_n \in \RR_n^\iid$, which means that $\omega_n = \rho^{\otimes n}$ for some $\rho\in \RR_1$,
\item Alternative hypothesis: $\omega_n\in \SS$, where $\SS$ is a sequence of sets of free states assumed to satisfy the Brand\~ao--Plenio axioms.
\end{itemize}
Here we treat the set $\RR_1$ generally and leave it completely arbitrary. Our main result states that, for all numbers $\e\in (0,1)$, it is possible to construct a sequence of universal tests that discriminate between the two above hypotheses in such a way that:
\begin{itemize}[leftmargin=10pt]
\item the type I error probability is kept below $\e$ for all $n$;
\item the type II error probability, denoted $\rel{\beta_\e}{\RR_n^{\iid}}{\SS}$, decays exponentially to zero as
\bb
\rel{\beta_\e}{\RR_n^{\iid}}{\SS} &\sim 2^{-n \stein\left(\RR^\iid\middle\|\SS\right)},\\
\stein(\RR^\iid\|\SS) &= \inf_{\rho\in \RR_1} D^\infty(\rho\|\SS)\, .
\ee
\end{itemize}
In order to establish this result, we refine the technique of `blurring', originally introduced in~\cite{GQSL} for the resolution of the simple i.i.d.\ form of the same problem and already successfully used in other important hypothesis testing problems~\cite{generalised-Sanov,doubly-comp-quantum}.  Although seemingly insufficient to tackle this composite variant in its basic form, we show that a suitable technical extensions can enhance the applicability of the blurring method. 



\section*{Quantum hypothesis testing}

Let $\HH$ be a finite-dimensional Hilbert space. Quantum states on $\HH$ are represented by density matrices, i.e.\ positive semi-definite matrices with unit trace. We will denote with $\D(\HH)$ the set of density operators on $\HH$. The fundamental task of composite quantum hypothesis testing is defined by two sequences of hypotheses $\RR = (\RR_n)_n$ and $\SS = (\SS_n)_n$, where each $\RR_n, \SS_n\subseteq \D(\HH^{\otimes n})$ is a set of quantum states on $\HH^{\otimes n}$. 
Operationally, the problem of hypothesis testing can be phrased as follows. One is given an unknown state $\omega_n\in \D\big(\HH^{\otimes n}\big)$, with the promise that one of the following two hypotheses is true: either the null hypothesis $\omega_n \in \RR_n$, or the alternative hypothesis $\omega_n\in \SS_n$. 
The goal is to distinguish between these two options, using an appropriate quantum measurement --- which, in general, will be over the whole $n$-copy system. 
There are then two types of error: a type I error, which consists in guessing the alternative hypothesis when the null hypothesis holds, and a type II error, which, conversely, consists in guessing the null hypothesis when the alternative hypothesis holds. The minimum type II error at a fixed type I error threshold, for a given $n$, can be expressed as
\bb
\beta_\e(\RR_n\|\SS_n) \coloneqq \inf&\left\{ \sup_{\sigma_n\in \SS_n} \Tr \sigma_n E_n:\ 0\leq E_n\leq \id, \right.\\
&\left.\phantom{\sup_{\sigma_n\in \SS_n} \Tr } \inf_{\rho_n\in \RR_n} \Tr \rho_n E_n \geq 1-\e \right\} ,
\ee
where $E_n$ is understood to act on $\HH^{\otimes n}$. An application of Sion's minimax theorem shows that we can alternatively rewrite (see e.g.~\cite[Lemma~31]{Fang2025})
\bb
- \log \beta_\e(\RR_n\|\SS_n) &\hphantom{:}= \rel{D_H^\e}{\co(\RR_n)}{\co(\SS_n)} 
\\&\coloneqq \!\inf_{\substack{\rho_n \in \co(\RR_n),\\ \sigma_n\in \co(\SS_n)}} D_H^\e(\rho_n\|\sigma_n)\, ,
\ee
where 
\bb
D_H^\e(\rho\|\sigma) \!\coloneqq\! -\log \min\left\{ \Tr \sigma E: 0\leq E\leq \id,\ \Tr \rho E\geq 1-\e\right\}
\label{D_H}
\ee
is the so-called hypothesis testing relative entropy~\cite{Buscemi2010}. 

The \deff{Stein exponent} characterises the asymptotic decay of $\beta_\e(\RR_n\|\SS_n)$ as a function of $n$:
\bb
\stein(\RR\|\SS) \!\coloneqq\! \lim_{\e\to 0^+} \liminf_{n\to\infty} \frac1n\, \rel{D_H^\e}{\co(\RR_n)}{\co(\SS_n)}\, ,
\label{Stein}
\ee
where, given a divergence $\mathds{D}(\cdot\|\cdot)$ and two sets $\AA$, $\BB$, we set $\mathds{D}(\AA\|\BB) \coloneqq \inf_{\rho\in \AA,\, \sigma\in \BB} \mathds{D}(\rho\|\sigma)$. 
We can also define a \deff{strong converse Stein exponent}, given by
\bb
\stein^\dag(\RR\|\SS) \!\coloneqq\! \lim_{\e\to 1^-} \limsup_{n\to\infty} \frac1n\, \rel{D_H^\e}{\co(\RR_n)}{\co(\SS_n)}\, .
\label{strong_converse_Stein}
\ee

In this paper, we will deal with the case where $\RR$ is a \deff{composite i.i.d.\ hypothesis}, meaning that
\bb
\RR = \RR^\iid \coloneqq \big( \RR_n^{\iid} \big)_n\, ,\quad \RR_n^{\iid} \coloneqq \left\{ \rho^{\otimes n}:\ \rho\in \RR_1\right\}
\label{composite_iid}
\ee
for some arbitrary single-copy base set $\RR_1\subseteq \D(\HH)$, while the alternative hypothesis is built from potentially much more general sets $\SS_n$, exhibiting a structure that could go far beyond i.i.d. 
As discussed in the Introduction, the motivation to consider such a setting comes from the problem of universal resource transformations in quantum resource theories: there, $\SS_n$ corresponds to the free states of a given theory, while tomographic estimates of unknown states make it necessary to understand composite i.i.d.\ hypothesis testing where $\RR_1$ is a small neighbourhood around some central state. Before being able to address that problem, we must therefore tackle the underlying hypothesis testing task.

One of the first extensions of the quantum Stein's lemma~\cite{Hiai1991} beyond the case of two simple i.i.d.\ hypotheses was a `quantum Sanov's theorem' of Bjelakovic et al.~\cite{Bjelakovic2005}, which considered precisely the case of a composite i.i.d.\ hypothesis $\RR^\iid$, while keeping the alternative hypothesis as the simple i.i.d.\ sequence $\SS_n = \{ \sigma^{\otimes n} \}$. 
What one observes empirically in such composite i.i.d.\ extensions is that the Stein exponent is given by the infimum, i.e.\ the worst case, of the Stein exponents corresponding to the cases where the null hypothesis is simple and i.i.d. Formally, what one could expect is that
\bb
\rel{\stein}{\RR^\iid}{\SS} \eqt{?} \inf_{\rho\in \RR_1} \stein(\rho\|\SS)\, ,
\label{goal_Stein_reduction_composite_iid}
\ee
where on the right-hand side we identified $\rho$ with the sequence of its i.i.d.\ repetitions.
However, because we are here concerned with a fully composite alternative hypothesis $\SS$, the previous methods do not apply --- and indeed, even for simple i.i.d.\ null hypotheses, the composite structure of the alternative one can cause considerable difficulty~\cite{Brandao2010,gap,Hayashi-Stein, GQSL}.

\section*{Main result \#1:\\composite i.i.d.\ hypothesis testing}

Our main result in the context of quantum hypothesis testing states that the conjectured composite i.i.d.\ generalised quantum Stein's lemma in Eq.~\eqref{goal_Stein_reduction_composite_iid} does indeed hold in the physically important case where $\SS$ satisfies a set of natural properties which mirror the basic structural properties of the set of separable states $\SEP$. These are the Brand\~ao--Plenio axioms~\cite{Brandao2010}: (1) convexity and topological closedness; (2) existence of a full-rank free state; (3) closedness under tracing out subsystems; (4) closedness under taking tensor products of free states; (5) closedness under permutations of subsystems.

\begin{boxed}
\begin{thm} \label{black_box_Stein_iid_thm}
Let $\RR_1\subseteq \D(\HH)$ be any set of states on the finite-dimensional Hilbert space $\HH$. Assume that the sequence $\SS = (\SS_n)_n$ of alternative hypotheses $\SS_n\subseteq \D(\HH^{\otimes n})$ satisfies all five Brand\~{a}o--Plenio axioms. Then
\bb
\lim_{n\to\infty} \frac1n\, \rel{D_H^\e}{\co\!\big(\RR_n^{\iid}\big)}{\SS_n} = \inf_{\rho\in \RR_1} D^\infty(\rho \|\SS)
\label{black_box_Stein_iid_DH}
\ee
for all $\e\in (0,1)$, 
where we recall that  $\RR_n^{\iid} = \left\{ \rho^{\otimes n}:\ \rho\in \RR_1\right\}$.
Equivalently, denoting by $\RR^\iid$ the sequence of sets $\big(\RR_n^{\iid}\big)_n$, we have
\bb
&\rel{\stein}{\RR^{\mathrm{iid}}}{\SS} = \rel{\stein^\dag}{\RR^{\mathrm{iid}}}{\SS} \\
&= \inf_{\rho\in \RR_1} D^\infty(\rho\|\SS) = \inf_{\rho\in \RR_1} \stein(\rho\|\SS)\, .
\label{black_box_Stein_iid}
\ee
\end{thm}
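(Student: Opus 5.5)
Since $\RR^\iid_n\ni\rho^{\otimes n}$ for each $\rho\in\RR_1$, the quantity $\rel{D_H^\e}{\co(\RR_n^\iid)}{\SS_n}$ is bounded above by $\rel{D_H^\e}{\rho^{\otimes n}}{\SS_n}$. The strong converse part of the generalised quantum Stein's lemma~\cite{Hayashi-Stein, GQSL} then gives $\limsup_n \frac1n \rel{D_H^\e}{\co(\RR_n^\iid)}{\SS_n}\le D^\infty(\rho\|\SS)$ for every $\e\in(0,1)$. Taking the infimum over $\rho$ yields the upper bound. All the work is therefore in the achievability direction: for every $\e\in(0,1)$ we must show $\liminf_n\frac1n \rel{D_H^\e}{\co(\RR_n^\iid)}{\SS_n}\ge \lambda:=\inf_{\rho\in\RR_1}D^\infty(\rho\|\SS)$. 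Equivalently, for every $\delta>0$ we need a single sequence of tests $E_n$ with $\Tr\rho^{\otimes n}E_n\ge 1-\e$ for all $\rho\in\RR_1$ (uniformly) and $\sup_{\sigma\in\SS_n}\Tr\sigma E_n\le 2^{-n(\lambda-\delta)}$. By lower semicontinuity of $D^\infty(\cdot\|\SS)$ (a consequence of the axioms, via asymptotic continuity of the relative entropy of resource and the full-rank free state), we may replace $\RR_1$ by its closure without changing $\lambda$. So $\RR_1$ can be taken compact.

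The plan is a covering plus universal-test argument. First, split the $n$ copies: use $o(n)$ copies for a coarse tomography (or an empirical-type/Schur–Weyl projection, which commutes with the permutation-invariant free set). This projects onto Young-diagram sectors $\lambda$ such that, on $\rho^{\otimes n}$, only diagrams close to the spectrum of $\rho$ survive. Second, I would build the test by taking the union over a finite net $\{\rho_i\}$ of $\RR_1$. For each net point, use the blurred test of~\cite{GQSL} constructed for $\rho_i$ at threshold $\lambda-\delta$. Since $D^\infty(\rho_i\|\SS)\ge\lambda$ and the net size is polynomial, or at worst $2^{o(n)}$, the type II error of the union test $E_n=\bigvee_i E_n^{(i)}$ stays at $2^{-n(\lambda-\delta)+o(n)}$. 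The crucial requirement is that each blurred test $E_n^{(i)}$ accepts not only $\rho_i^{\otimes n}$ but also $\rho^{\otimes n}$ for all $\rho$ in a neighbourhood of $\rho_i$ of fixed radius, with probability close to $1$. I expect blurring to deliver this, because it is precisely a robustness-to-perturbation device. Blurring maps $\rho_i^{\otimes n}$ to a mixture of i.i.d.-like states that dominate, in the operator-inequality sense up to $2^{o(n)}$ factors, all nearby $\rho^{\otimes n}$. A one-shot argument via $D_H^\e$ and $D_{\max}$ then transfers acceptance from $\rho_i$ to $\rho$.

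The main obstacle is exactly this uniformity. The original blurring lemma controls $\rho_i^{\otimes n}$ only, with an $\e$-dependence that is not uniform over a neighbourhood. Moreover, the GQSL achievability proceeds through a contradiction/bootstrapping argument on $\liminf$ versus $\limsup$ of $\frac1n D_H^\e$, rather than constructing explicit tests. What I would try first is to redo the core one-shot step of~\cite{GQSL} with composite inputs: lower bound $\rel{D_H^\e}{\co(\RR_n^\iid)}{\SS_n}$ by $\rel{D_H^{\e'}}{\co(B_\eta(\rho_i)^{\iid}_n)}{\SS_n}$ for each ball $B_\eta(\rho_i)$, up to a $\log$ of the net size. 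Here $D_H$ of a union is bounded below by the minimum over the pieces minus the logarithm of the number of pieces, using the union of projectors or Hayashi–Nagaoka. Then prove the ball case by blurring $\rho^{\otimes n}$ at scale $\eta$, chosen larger than the ball radius, so that every state in the ball blurs to essentially the same state. This reduces the ball problem to the simple i.i.d. GQSL for the blurred centre. Continuity of $D^\infty$ in $\eta$ then closes the $\delta$-gap. Finally, the equalities between the Stein exponent, the strong converse exponent, and $\inf_\rho\stein(\rho\|\SS)$ follow from the $\e$-independent limit together with the simple-case GQSL.
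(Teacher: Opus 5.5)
\paragraph{Verdict.} Your converse direction is correct and is the same as the paper's. The achievability proof, however, has a genuine gap, and it sits exactly at the uniformity step you flag. The step you propose to close it with does not work.

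\paragraph{Why the reduction to balls does not help.} Splitting $\RR_1$ into balls $B_\eta(\rho_i)$ is not a reduction by itself, and both choices of radius leave the core problem open:
\begin{itemize}
\item \emph{Fixed radius.} Each ball is again an instance of the theorem, now with $\RR_1$ replaced by $B_\eta(\rho_i)\cap\RR_1$. You still face a continuum of i.i.d.\ states and their convex hull. Since the generalised Stein's lemma is proved non-constructively, there is no centre test to make robust.
\item \emph{Radius $o(1/n)$.} Then $\rho^{\otimes n}$ is close to $\rho_i^{\otimes n}$. But the centres now depend on $n$, so you need a Stein's lemma for a fluctuating i.i.d.\ source, which the known versions do not give.
\end{itemize}

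\paragraph{Why the blurring mechanism fails as stated.} The blurring map replaces only about $n\delta$ of the $n$ systems by copies of $\tau$. It therefore moves the empirical state of $\rho^{\otimes n}$ by $O(\delta)$ along the segment towards $\tau$. For fixed $\rho\neq\rho'$ with $\rho,\rho',\tau$ not collinear, the two blurred states $\int_0^\Delta\frac{\dd\delta}{\Delta}\,\widebar{B}^{\,\tau}_{n,\delta}(\rho^{\otimes n})$ and $\int_0^\Delta\frac{\dd\delta}{\Delta}\,\widebar{B}^{\,\tau}_{n,\delta}(\rho'^{\otimes n})$ remain asymptotically orthogonal. So states in a ball do not ``blur to essentially the same state''. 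Nothing then guarantees that a test built for the centre, even precomposed with blurring, accepts them. What might hold is an exponential-domination statement, i.e.\ a new blurring lemma for states close to mixtures of $\rho^{\otimes n}$ with $\rho$ in a ball. That lemma is precisely the unproved core of your argument.

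\paragraph{Smaller points.}
\begin{itemize}
\item For non-commuting projectors, $\Tr\sigma(P\vee Q)$ is not bounded by $\Tr\sigma P+\Tr\sigma Q$. The union test therefore needs another construction, for example tomography followed by outcome-dependent tests (using closure under partial traces), or the quasi-concavity bound below.
\item Replacing $\RR_1$ by its closure needs upper, not lower, semicontinuity of $D^\infty(\cdot\|\SS)$.
\end{itemize}

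\paragraph{How the paper closes the gap.} It never uses balls.
\begin{enumerate}
\item \emph{Removing the convex hull.} Apply Carath\'eodory's theorem in the space of permutation-symmetric operators, whose dimension is only $\mathrm{poly}(n)$ by Schur--Weyl duality. Combine this with an approximate quasi-concavity of $D_H^\e$, obtained from a quasi-concavity bound for $D_{\max}^{\e,P}$ via an Uhlmann-type decomposition plus weak/strong converse duality. This gives $\rel{D_H^\e}{\co(\RR_n^\iid)}{\SS_n}\ge\inf_{\rho\in\RR_1}\rel{D_H^{\e-\delta}}{\rho^{\otimes n}}{\SS_n}-O(\log n)$.
\item \emph{Contradiction setup.} The worst case $\rho_n$ now depends on $n$. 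Take $\rho_n\in\RR_1$ and a permutation-symmetric $\Omega_n$ that is $\e$-close to $\rho_n^{\otimes n}$ with $D_{\max}(\Omega_n\|\SS_n)\le n\lambda$. Extract a subsequence $\rho_n\to\widebar{\rho}$.
\item \emph{Blurring $\Omega_n$, not the null states.} Blur $\Omega_n$ with reference state $\rho_n$ itself. The blurring lemma still applies because, after purification, $V_n^{\otimes n}$ rotates $\psi_n$ to a fixed $\ket{0}$ while preserving permutation symmetry.
\item \emph{Switching to a free state.} Replace $\rho_n$ by the full-rank free state $\tau\ge c\id$, at a cost of only $c^{n\Delta}$. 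This turns the $\e$-smoothing into an arbitrarily small $\eta$-smoothing, at a rate cost of $\Delta\log\frac1c$.
\item \emph{Passing to the limit state.} Continuity of $D^\infty$ alone cannot take you from $\rho_n^{\otimes n}$ to $\widebar{\rho}^{\,\otimes n}$, since $\frac12\|\rho_n^{\otimes n}-\widebar{\rho}^{\,\otimes n}\|_1$ need not vanish. The paper instead uses the uniform bound $\frac1n|\rel{D}{\rho^{\otimes n}}{\SS_n}-\rel{D}{\omega^{\otimes n}}{\SS_n}|\le\xi\log\frac1c+g(\xi)$, valid whenever $\frac12\|\rho-\omega\|_1\le\xi$.
\end{enumerate}
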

\end{boxed}

At first glance, the problem at hand could seem a technically immediate extension of the standard generalised quantum Stein's lemma, which states that~\cite{Hayashi-Stein, GQSL}
\bb
\lim_{n\to\infty} \frac1n\, \rel{D_H^\e}{\rho^{\otimes n}}{\SS_n} = D^\infty(\rho \|\SS)
\label{GQSL}
\ee
for all fixed states $\rho$ and all $\e\in (0,1)$. However, this is not the case: the crucial difference is that in~\eqref{black_box_Stein_iid_DH} we have not merely an i.i.d.\ state from some set, but in fact a generic convex combination of multiple i.i.d.\ states. 
Physically, this precisely corresponds to the difference between designing a test that is tailored to a fixed state $\rho$ in the null hypothesis and designing a test that works universally for \emph{all} states $\rho\in \RR_1$. 

The first step to overcome this hurdle is to remove the need to consider convex combinations and go back to testing against a \emph{sequence} of i.i.d.\ states $\rho(n)^{\otimes n}$, where, for each $n$, $\rho(n)\in \RR_1$ is a state in the base set. We accomplish this through a novel quasi-concavity property for the hypothesis testing relative entropy, which allows us to do this at the expense of introducing small error terms that become insignificant in the asymptotic limit. 
This technical development relies on the formalism of one-shot relative entropies, and in particular on the `weak/strong converse' duality between hypothesis testing and the smooth max-relative entropy~\cite{Tomamichel2013, Anshu2019, tight-relations}. However, this on its own does not suffice: indeed, while $\rho(n)^{\otimes n}$ is i.i.d.\ for every $n$, the base state $\rho(n)$ is not fixed, and can in general `fluctuate' around some limiting state $\widebar{\rho} = \lim_n \rho(n)$. The known formulations of the generalised quantum Stein's lemma~\cite{Hayashi-Stein, GQSL} are not robust against such fluctuations.

To conclude, we thus need a strengthened variant of the generalised quantum Stein's lemma that would allow us to cover this setting. To obtain it, we go back to the blurring technique devised in~\cite{GQSL}. This is based on the idea that adding a carefully controlled amount of noise to an unknown state can `tame' it enough to make it amenable to analysis. The blurring map acts on some $n$-partite input operator $X_n$ as
\bb
\widebar{B}_{n,\delta} (X_n) &\coloneqq \Tr_{\floor{n \delta}} \E_{\pi}\, U_\pi^{\vphantom{\dag}}\Big(X_n \otimes \sigma_0^{\otimes \floor{n \delta}} \Big) U_\pi^\dag\, ,
\ee
where $\delta>0$ is a small parameter that quantifies how many subsystems in some noisy state $\sigma_0$ we add, $\pi$ is a uniformly random permutation that shuffles the subsystems, and the final partial trace brings back the total number of systems to $n$.

The `quantum blurring lemma' of~\cite{GQSL}, however, is in itself not strong enough to handle the case of a fluctuating i.i.d.\ state that is of interest here. And yet, intuitively, since our fluctuating state $\rho(n)$ is very close to $\widebar{\rho}$ in the limit of large $n$, the small amount of noise introduced by the blurring map should nevertheless succeed in making it essentially identical to $\widebar{\rho}$ for our purposes. 

To analyse this problem mathematically and show that this is indeed what happens, we devise a dynamical version of the `bosonic lifting' technique of~\cite{GQSL}. In itself, bosonic lifting is the procedure of embedding our sequence of states into a single infinite-dimensional, bosonic space, where it can be analysed with techniques from continuous-variable quantum information theory. In this picture, the reference state is mapped into the vacuum state of the bosonic system. In our case, we have a fluctuating reference state that depends on $n$, and this means that the embedding needs to become itself $n$-dependent. With a careful mathematical analysis --- the complete details of which we provide in the \hyperlink{supp}{Appendix} --- we thus adapt quantum blurring and the bosonic lifting technique to handle fluctuating sources, finally proving Theorem~\ref{black_box_Stein_iid_thm}.

\section*{Main result \#2:\\universal transformations of quantum resources}

To explain the details of the connection between hypothesis testing and resource transformations formalised in Brand\~ao and Plenio's framework~\cite{BrandaoPlenio2}, let us consider the illustrative example of entanglement distillation. Here we are concerned with the conversion of a fixed, known input state $\rho = \rho_{AB}$ into copies of the maximally entangled state $\Phi_+ = \ketbra{\Phi_+}$. For a given rate $R$, define then the sequence of channels
\begin{equation}\begin{aligned}
  \EE_n(\rho^{\otimes n}) =  \Tr \left[ E_n \rho^{\otimes n} \right] \Phi_+ + \Tr \left[ (\id - E_n) \rho^{\otimes n} \right] \frac{\id - \Phi_+^{\otimes \floor{Rn}}}{4^{\floor{R n}} - 1}.
\end{aligned}\end{equation}
Two key facts can be observed here. First, the fidelity of the output state $\EE_n(\rho^{\otimes n})$ to $\Phi^{\otimes \floor{Rn}}$ equals $\Tr \left[ E_n \rho^{\otimes n} \right]$, which is precisely the type I error under the test $(E_n, \id - E_n)$. Second, if $\Tr \left[ E_n \omega_n \right] \leq 2^{- R n}$ for some state $\omega_n$, then $\EE_n(\omega_n) \in \SEP_n$, as can be readily verified thanks to the isotropic structure of the output state~\cite{Horodecki-teleportation}. What this means is that, for any rate $R$ such that $2^{-R n} = \beta_\ve (\rho^{\otimes n} \| \SEP_n)$, the map $\EE_n$ constructed from the optimal hypothesis test is non-entangling, i.e.\ it always maps separable states into separable states, and crucially it performs the conversion of $\rho^{\otimes n}$ into $\Phi_+^{\otimes \floor{Rn}}$ with an error of $\ve$. Denoting by $R_{\rm NE}(\rho \to \Phi_+)$ the best achievable rate, i.e.\ the supremum of all rates $R$ achievable with error $\ve$ that vanishes asymptotically, the above tells us that this rate is exactly $R_{\rm NE}(\rho \to \Phi_+) = \stein(\rho \| \SEP)$. By the generalised quantum Stein's lemma~\cite{Brandao2010,Hayashi-Stein, GQSL}, this equals precisely $D^\infty(\rho\|\SEP)$. 
A more general variant of this operational connection was shown in~\cite{BrandaoPlenio2} for arbitrary target states $\sigma$, provided that the choice of non-entangling operations (NE) is replaced with asymptotically non-entangling ones (ANE). As already shown in~\cite{Brandao2010, Hayashi-Stein, GQSL}, whenever $\SS$ satisfies the Brand\~ao--Plenio axioms, the equivalence $R_{\rm ARNG}(\rho \to \sigma) = D^\infty(\rho\|\SS)$ holds~\cite{Brandao-Gour,Hayashi-Stein, GQSL}. 

However, we see also from this construction that the design of the protocol is intrinsically based on the knowledge of the input state $\rho$, as the optimal test must be chosen for this specific state.
The main application of our composite i.i.d.\ extension of the generalised quantum Stein's lemma is to enable a construction of protocols that do not require any knowledge of the input state. 
Formally, we will say that a rate $R$ is universally $\ve$-achievable with asymptotically resource--non-generating operations if there exists a sequence of channels $(\EE_n)_n \in \rm{ARNG}$ such that $\frac12 \| \EE_n(\rho^{\otimes n}) - \omega^{\otimes Rn}\|_1 \leq \ve$ for any state $\rho^{\otimes n}$. The best rate achievable with vanishing error is then
\begin{equation}\begin{aligned}
  R_{\mathrm{ARNG}}^{\mathrm{u}}(\rho\to \omega) \coloneqq \lim_{\ve \to 0^+} \sup \left\{ R : R \text{ is $\ve$-achievable} \right\}.
\end{aligned}\end{equation}
Clearly, $R_{\mathrm{ARNG}}^{\mathrm{u}}(\rho\to \omega)$ is always upper bounded by the rate $R_{\mathrm{ARNG}}(\rho\to \omega)$, which does not require the feasible transformation protocols $\EE_n$ to be universal. Our main result is to show that the two are in fact equal, and the optimal rate of $D^\infty(\rho\|\SS)$ is achieved in a completely universal fashion.

\begin{boxed}
\begin{thm}
Let $\HH$ be a finite-dimensional Hilbert space, and let $\SS = (\SS_n)_n$ be a sequence of sets of $n$-copy states $\SS_n\subseteq \D(\HH^{\otimes n})$ that satisfies all five Brand\~{a}o--Plenio axioms. Further, assume that there exists a tomographically complete set of measurements that is compatible with $\SS$, and pick some $\omega\in \D(\HH)$ such that $D^\infty(\omega\|\SS) > 0$. Then, for all $\delta>0$, there exists a universal protocol that uses only asymptotically resource non-generating operations and distills copies of $\omega$ from copies of any unknown state $\rho$ with asymptotically vanishing error and rate
\bb
R \geq \frac{D^\infty(\rho\|\SS)}{D^\infty(\omega\|\SS)} - \delta\,.
\ee
Hence, for any $\rho$, the optimal rate achievable universally equals the optimal rate achievable with full knowledge of the state $\rho$,
\bb
R_{\mathrm{ARNG}}^{\mathrm{u}}(\rho\to \omega) = R_{\mathrm{ARNG}}(\rho\to \omega) = \frac{D^\infty(\rho\|\SS)}{D^\infty(\omega\|\SS)}.
\ee

For the particular case of entanglement distillation, this result applies under non-entangling operations and yields
\bb
R_{\mathrm{NE}}^{\mathrm{u}}(\rho\to \Phi_+) = D^\infty(\rho\|\SEP)\,.
\ee
\end{thm}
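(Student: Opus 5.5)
The plan is a two-stage protocol: estimate first, then distill with a composite test. In the first stage a sublinear but growing number $m_n = o(n)$ of the $n$ input copies is sacrificed, for instance $m_n = \floor{\sqrt n}$. On these copies we perform the tomographically complete measurement compatible with $\SS$. Compatibility should mean that the measurement channel, followed by any classically controlled free operation, maps free states to free states. This makes the whole protocol resource non-generating, or at least asymptotically so, once we average over the outcomes.

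Standard concentration, via Hoeffding and a finite-dimensional union bound, gives the following. With probability at least $1-\e/2$, the estimate $\hat\rho$ satisfies $\|\hat\rho-\rho\|_1\le \eta_n$ with $\eta_n\to 0$. This holds uniformly in $\rho$, and it is the key point for universality. Let $\RR_1(\hat\rho)\coloneqq\{\tau:\|\tau-\hat\rho\|_1\le\eta_n\}$. Then $\rho\in\RR_1(\hat\rho)$ with high probability.

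In the second stage, conditioned on $\hat\rho$, we apply the Brand\~ao--Plenio construction to the remaining $n-m_n$ copies. The only change is that the test $E_n$ is now optimal for the \emph{composite} null hypothesis $\RR^\iid_{n-m_n}$ built from the base set $\RR_1(\hat\rho)$. Concretely, we take the ARNG map $\EE_n$ built from $E_n$ exactly as in the NE example, with $\Phi_+$ replaced by $\omega^{\otimes \floor{Rn}}$ and the asymptotic-non-generation argument of~\cite{BrandaoPlenio2}. For $\omega=\Phi_+$ the isotropic twirl gives genuinely non-entangling maps.

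Every $\tau\in\RR_1(\hat\rho)$ has $\Tr E_n\tau^{\otimes(n-m_n)}\ge 1-\e/2$, and in particular $\rho$ does. Hence the fidelity with the target is at least $1-\e$ overall. The achievable rate is governed by Theorem~\ref{black_box_Stein_iid_thm}: for each fixed ball, the type II error behaves like $2^{-n\inf_{\tau\in\RR_1(\hat\rho)}D^\infty(\tau\|\SS)}$.

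Two issues remain, and they form the main obstacle. First, the ball shrinks with $n$, whereas Theorem~\ref{black_box_Stein_iid_thm} concerns a fixed set $\RR_1$. Second, we need
\bb
\inf_{\|\tau-\rho\|_1\le 2\eta}D^\infty(\tau\|\SS)\ \ge\ D^\infty(\rho\|\SS)-\delta'
\ee
for small $\eta$. The second issue is handled by asymptotic continuity of $D^\infty(\cdot\|\SS)$. This is a Fannes-type bound, valid under the Brand\~ao--Plenio axioms because $\SS$ contains a full-rank state and is closed under tensor products, which bounds $D(\cdot\|\SS_n)$ by $O(n)$. It makes $D^\infty(\cdot\|\SS)$ uniformly continuous on $\D(\HH)$.

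The first issue is handled by discretisation instead of a shrinking ball. Fix a finite $\eta$-net $\{\rho_j\}$ of $\D(\HH)$, with $\eta$ depending only on $\delta$. Let the protocol choose the ball $\RR_1^{(j)}=\{\tau:\|\tau-\rho_j\|_1\le 2\eta\}$ around the net point nearest to $\hat\rho$. Since there are finitely many balls, Theorem~\ref{black_box_Stein_iid_thm} applied to each gives a uniform $n_0$ beyond which the $\e$-tests achieve exponent $\inf_{\RR_1^{(j)}}D^\infty-\delta/2$ for every $j$ simultaneously.

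We set $R=\big(\inf_{\RR_1^{(j)}}D^\infty(\cdot\|\SS)-\delta\big)/D^\infty(\omega\|\SS)$, a rate that depends only on the observed data. By continuity this rate is at least $D^\infty(\rho\|\SS)/D^\infty(\omega\|\SS)-\delta''$. The output length then varies with the outcome. We fix it by padding, or by discarding extra target copies, which only costs rate. The loss of $m_n=o(n)$ copies does not affect the rate.

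Finally, the claimed rate equalities follow by combining achievability with the trivial bound $R^{\rm u}\le R_{\rm ARNG}=D^\infty(\rho\|\SS)/D^\infty(\omega\|\SS)$, which is known from~\cite{BrandaoPlenio2,Hayashi-Stein,GQSL}. For entanglement distillation, separable-compatible local tomography together with the isotropic construction yields NE maps.
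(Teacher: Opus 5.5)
Your proposal is correct and takes essentially the same route as the paper. Both arguments run tomography with $\SS$-compatible measurements on a vanishing fraction of the copies, then apply the Brand\~ao--Plenio protocol built from the composite test of Theorem~\ref{black_box_Stein_iid_thm} on a ball around the estimate, invoke continuity of $D^\infty(\cdot\|\SS)$ (which the paper packages as Corollary~\ref{continuity_Stein_iid_cor}), obtain asymptotic non-generation from compatibility plus convexity of the robustness, and close with the known converse.

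Your finite $\eta$-net is a real gain in rigour over the paper's argument. It makes both the threshold $n_0$ from Theorem~\ref{black_box_Stein_iid_thm} and the ARNG parameter $\delta_n$ uniform over tomography outcomes; the paper leaves this implicit when it applies the theorem directly to the outcome-dependent ball $B_\e(\bar{\rho})$. Sacrificing $o(n)$ copies rather than $\floor{\mu n}$ also removes the need to send $\mu\to 0$ at the end.
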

\end{boxed}

In the above theorem, the requirement of tomographical completeness of a compatible set of measurements is similar to assumptions typically appearing in related settings~\cite{brandao_adversarial,generalised-Sanov,Hayashi-Sanov-2}. What it means is that there exists a class of measurements with the following two properties: first, their combined statistics suffice to determine the state; and secondly, measuring a subsystem of a quantum system that is in a free state leaves a free state in the remaining subsystems, for every admissible outcome. This property is satisfied in all quantum resources known to us and is merely a technical requirement.

\begin{proof}
Let $\rho$ be any state --- unknown to us --- and let $B_\e(\rho)$ denote the set of states at trace norm distance at most $\e$ from $\rho$. 
For some $\mu>0$ to be determined later, we consider the following protocol: first, sacrifice $\floor{\mu n}$ copies of the unknown state $\rho$ to run tomography with the compatible set of measurements, whose existence is guaranteed by assumption. This will produce an ansatz $\widebar{\rho}$ together with the guarantee that $\widebar\rho \in B_\e(\rho)$ with probability $p_n \tendsn{} 1$. On the remaining copies of the state, of which there are at least $(1-\mu) n$, we can then apply the Brand\~ao--Plenio protocol associated with the hypothesis testing task of distinguishing $B_\e\big(\,\widebar{\rho}\,\big)^\iid$ from $\SS$. In itself, this can be made to distil copies of $\omega$ at an asymptotic rate 
\bb
R' \geq \frac{\Rel{\stein\!}{B_\e\big(\,\widebar{\rho}\,\big)^\iid}{\SS} - \xi}{D^\infty(\omega\|\SS)}
\ee
with asymptotically vanishing error; here, $\xi>0$ is an arbitrarily small constant. The overall rate we are achieving is smaller than this, however, because we are running the distillation protocol only on the remaining $(1-\mu)n$ copies of the unknown state. This means that in fact
\bb\label{eq:univ_rate_achiev}
R_{\mathrm{RNG}}^{\mathrm{u}}(\rho\to \omega) &\geq (1-\mu) R'\\
&\geq (1-\mu)\, \frac{\Rel{\stein\!}{B_\e\big(\,\widebar{\rho}\,\big)^\iid}{\SS}- \xi}{D^\infty(\omega\|\SS)}\, .
\ee
Invoking Theorem~\ref{black_box_Stein_iid_thm} for the choice of the set $B_\e(\rho)$ tells us that
\bb
\lim_{\e\to 0^+} \rel{\stein}{B_\e(\rho)^\iid}{\SS} = \stein(\rho\|\SS) = D^\infty(\rho\|\SS)\, .
\ee
Therefore, the rightmost side of~\eqref{eq:univ_rate_achiev} can be made larger than $D^\infty(\rho\|\SS) / D^\infty(\omega\|\SS) - \delta$ for any $\delta$ by choosing $\mu$, $\xi$, and $\e$ small enough. 

Due to the fact that tomography is performed using measurements that are compatible with $\SS$, and the Brand\~ao--Plenio protocol is asymptotically resource non-generating, the overall protocol is also asymptotically resource non-generating. 
This can be seen as follows. 
Let $\{M_k\}_k$ be a POVM on a single subsystem representing the tomographically complete measurement compatible with $\SS$. The resource distillation protocol is run by applying the tomographically complete measurement on $\mu n$ copies, represented by $\{M_{\vec{k}}\}_{\vec{k}}$ where $M_{\vec{k}} = M_{k_1}\otimes M_{k_2}\otimes \dots \otimes M_{k_{\mu n}}$ is an POVM element with output $\vec{k}=(k_1,\dots, k_{\mu n})$ of tomographic data, followed by applying an asymptotically resource non-generating operation $\{\Lambda_{(1-\mu)n}^{\vec{k}}\}_n$ depending on the tomographic data $\vec{k}$ to the $(1-\mu)n$ remaining systems. 
This means that the protocol is represented by the channel $\sum_{\vec{k}}\Lambda_{(1-\mu)n}^{\vec{k}}\otimes \pazocal{M}_{\vec{k}}$ where $\pazocal{M}_{\vec{k}}(\cdot) = \Tr(M_{\vec{k}}\,\cdot)$.
Recall now that the fact that $\{\Lambda_{(1-\mu)n}\}_n$ is asymptotically resource non-generating ensures that there exists a series $\{\delta_n\}_n$ of positive real numbers such that $R_g(\Lambda_{(1-\mu)n}(\sigma_n))\leq \delta_n$ and $\delta_n\xrightarrow[n\to\infty]{}0$ for an arbitrary sequence $\{\sigma_n\}_n$ of free states, where $R_g(\rho)=\min\{s\geq 1\,|\, \rho\leq s\sigma, \sigma\in\SS\}$ is the generalized robustness of resource.
Then, we get for any free state $\sigma_n\in\SS$ that
\bb
R_g\left(\sum_k\Lambda_{(1-\mu)n}^{\vec{k}}\otimes \pazocal{M}_{\vec{k}}(\sigma_n)\right) &= R_g\left(\sum_{\vec{k}}p_{\vec{k}}\Lambda_{(1-\mu)n}^{\vec{k}}(\sigma_{(1-\mu)n}^{\vec{k}})\right)\\
& \leq  \sum_{\vec{k}}p_{\vec{k}}R_g\left(\Lambda_{(1-\mu)n}^{\vec{k}}(\sigma_{(1-\mu)n}^{\vec{k}})\right)\\
&\leq \sum_{\vec{k}}p_{\vec{k}} \delta_{(1-\mu)n} \xrightarrow[n\to\infty]{} 0
\ee
where we set $p_{\vec{k}}=\Tr(M_{\vec{k}}\sigma_n)$ and $\sigma^{\vec{k}}_{(1-\mu)n}=\Tr_{\mu n}(\mathbb{I}\otimes M_{\vec{k}}\,\sigma_n)/p_{\vec{k}}$ is the post-measurement state conditioned on the outcome $\vec{k}$ of the measurement applied to the first $\mu n$ subsystems. 
In the second line, we used the convexity of the generalized robustness. In the third line, we used that $\sigma^{\vec{k}}_{(1-\mu)n}\in\SS$ for every $\vec{k}$ because of the assumption that $\{M_{\vec{k}}\}_{\vec{k}}$ is compatible with $\SS$.
This ensures that the distillation protocol as a whole is also asymptotically resource non-generating.
Our achievability proof is thus complete.

The argument for entanglement distillation proceeds analogously, using now the distillation protocol of~\cite{BrandaoPlenio2}, in which case the class of non-entangling operations suffices. Using the fact that $R_{\mathrm{ARNG}}^{\mathrm{u}}(\rho\to \omega) \leq R_{\mathrm{ARNG}}(\rho\to \omega) \leq D^\infty(\rho\|\SS)$~\cite{BrandaoPlenio2,gap} concludes the proof of the theorem.
\end{proof}

\acknowledgments
We are grateful to Hayata Yamasaki and Marco Tomamichel for discussions. L.L.\ acknowledges financial support from the European Union (ERC StG ETQO, Grant Agreement no.\ 101165230). B.R.\ was supported by the Japan Science and Technology Agency (JST) PRESTO grant no.\ JPMJPR25FB. R.T.\ was supported by JST CREST Grant Number JPMJCR23I3, JSPS KAKENHI Grant Number JP24K16975, JP25K00924, JP26H02015, JST NEXUS Grant Number JPMJNX26C2.

\bibliography{biblio}

\clearpage
\onecolumngrid
\appendix 

\renewcommand{\theequation}{S\arabic{equation}}
\setcounter{equation}{0}

\hypertarget{supp}{}
\begin{center}
\vspace*{\baselineskip}
{\textbf{\large --- Appendix ---}}\\[1pt] \quad \\
\end{center}


The Appendix is devoted to a complete proof of the composite i.i.d.\ generalised quantum Stein's lemma in Theorem~\ref{black_box_Stein_iid_thm}. Before we tackle the main proof through an extension of blurring techniques in Appendix~\ref{sec:mainproof}, we begin in Appendix~\ref{sec_quasi_concavility_smooth_relent} with a collection of useful technical results that will then allow us to reduce the problem of testing convex combinations of i.i.d.\ states to the testing of a single one in Appendix~\ref{sec:reduction}. 

\section{Approximate quasi-concavity of one-shot relative entropies} \label{sec_quasi_concavility_smooth_relent}

Following~\cite{tight-relations}, we will write $D_{\max}^{\e,P}$ for the smooth max-relative entropy, where the smoothing is with respect to the purified distance $P(\rho,\sigma)\coloneqq \sqrt{1 - F^2(\rho,\sigma)}$, with $F(\rho,\sigma) \coloneqq \big\|\sqrt{\rho}\sqrt{\sigma} \big\|_1$. Without loss of generality, we smooth only over normalised states.

\begin{lemma} \label{decomposition_lemma}
Let $(p(x),\rho_x)_x$ be a finite ensemble of states on the same quantum system, and let $\omega$ denote another state on the same system. Then there exists a decomposition $\omega = \sum_x q(x)\,\omega_x$ such that
\bb
\sum_x \sqrt{p(x)\, q(x)}\, F(\rho_x,\omega_x) = F\left(\sumno_x p(x)\,\rho_x,\, \omega \right) .
\ee
\end{lemma}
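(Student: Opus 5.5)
The plan is to prove the two inequalities $\leq$ and $\geq$ separately. The inequality $\geq$ holds for \emph{every} decomposition of $\omega$. The inequality $\leq$ is obtained for one specific decomposition, built via Uhlmann's theorem on a purification carrying a classical register that labels the ensemble.

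\emph{Step 1 (upper bound for any decomposition).} Let $\omega=\sum_x q(x)\,\omega_x$ be an arbitrary decomposition. Introduce a classical register $X$ and the classical--quantum states
\[
\widetilde\rho \coloneqq \sumno_x p(x)\,\rho_x\otimes\ketbra{x}_X ,\qquad \widetilde\omega \coloneqq \sumno_x q(x)\,\omega_x\otimes\ketbra{x}_X .
\]
Since both operators are block diagonal in the same basis of $X$, a direct computation gives $F(\widetilde\rho,\widetilde\omega)=\sum_x \sqrt{p(x)q(x)}\,F(\rho_x,\omega_x)$. The partial trace over $X$ maps these states to $\sum_x p(x)\rho_x$ and $\omega$, respectively. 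Monotonicity of the fidelity under this channel then yields $\sum_x\sqrt{p(x)q(x)}\,F(\rho_x,\omega_x)\leq F\big(\sum_x p(x)\rho_x,\omega\big)$. This is the standard strong concavity of the fidelity.

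\emph{Step 2 (construction achieving the reverse inequality).} Let $A$ be the system and $R$ a reference of the same dimension. For each $x$, pick a purification $\ket{\psi_x}_{AR}$ of $\rho_x$, and set
\[
\ket{\psi}_{ARX}\coloneqq\sumno_x\sqrt{p(x)}\,\ket{\psi_x}_{AR}\ket{x}_X .
\]
This is a purification of $\sum_x p(x)\rho_x$, because the vectors $\ket{x}$ are orthonormal. The reference $RX$ has dimension at least $\dim A$, so $\omega$ also admits purifications on $ARX$. By Uhlmann's theorem there is a purification $\ket{\phi}_{ARX}$ of $\omega$ with $|\braket{\psi|\phi}|=F\big(\sum_x p(x)\rho_x,\omega\big)$.

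Expand $\ket\phi=\sum_x\sqrt{q(x)}\,\ket{\phi_x}_{AR}\ket{x}_X$, where $q(x)\coloneqq\big\|(\id_{AR}\otimes\bra{x})\ket\phi\big\|^2$ and $\ket{\phi_x}$ is the normalised conditional vector. If $q(x)=0$, pick $\ket{\phi_x}$ arbitrarily. Define $\omega_x\coloneqq\Tr_R\ketbra{\phi_x}$. Tracing out $RX$ from $\ketbra{\phi}$ kills the off-diagonal terms in $X$, so $\omega=\sum_x q(x)\,\omega_x$ is a valid decomposition. Moreover,
\[
F\Big(\sumno_x p(x)\rho_x,\omega\Big)=|\braket{\psi|\phi}|=\Big|\sumno_x\sqrt{p(x)q(x)}\,\braket{\psi_x|\phi_x}\Big|\leq\sumno_x\sqrt{p(x)q(x)}\,|\braket{\psi_x|\phi_x}| .
\]
Each term satisfies $|\braket{\psi_x|\phi_x}|\leq F(\rho_x,\omega_x)$, again by Uhlmann's theorem, since $\ket{\psi_x}$ and $\ket{\phi_x}$ purify $\rho_x$ and $\omega_x$. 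Combining this with Step 1 applied to this particular decomposition gives equality.

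\emph{Main obstacle.} Nothing here is deep. The one point needing care is the bookkeeping in Step 2: the purification space must be large enough to host a purification of $\omega$, which it is because $RX$ is at least as large as $A$. In addition, the vanishing-weight terms $q(x)=0$ must be handled by an arbitrary choice of $\omega_x$. Those terms contribute zero to both sides, so the argument is unaffected.
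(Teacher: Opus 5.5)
Your proof is correct and follows essentially the same route as the paper. For the $\geq$ direction you use strong concavity of the fidelity. For the other direction you take an Uhlmann-optimal purification $\ket{\phi}$ of $\omega$ on the reference space carrying the classical label register $X$, and define $q(x)$ and $\omega_x$ from its $X$-components. The only cosmetic difference is in bounding the overlap: you bound $|\langle\psi|\phi\rangle|$ by the triangle inequality and a per-term Uhlmann bound, whereas the paper gets the same two inequalities from data processing (dephasing $X$, then tracing out the reference).
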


\begin{proof}
The fact that the left-hand side is at most equal to the right-hand side for \emph{any} decomposition $\omega = \sum_x q(x)\,\omega_x$ is a simple application of the strong concavity of the fidelity. See, for example,~\cite[Theorem~9.7]{NC}. The opposite inequality relies on Uhlmann's theorem, and its proof follows closely that of~\cite[Theorem~9.7]{NC}. Indeed, denoting with $A$ the underlying quantum systems, for all $x$ call $\ket{\psi_x}$ a purification of $\rho_x$ on $AA'$, where, without loss of generality, $A'\simeq A$ is the same for all $x$. Construct the associated purification
\bb
\ket{\psi}_{AA'X} \coloneqq \sum_x \sqrt{p(x)} \ket{\psi_x}_{AA'} \ket{x}_X
\ee
of $\sum_x p(x)\,\rho_x$. By Uhlmann's theorem, there is a purification 
\bb
\ket{\phi}_{AA'X} = \sum_x \sqrt{q(x)} \ket{\phi_x}_{AA'} \ket{x}_X
\ee
of $\omega$ such that
\bb
F\left(\sumno_x p(x)\,\rho_x,\, \omega\right) &= F(\psi,\phi) \\
&\leqt{(i)} F\left( \sumno_x p(x)\, \psi_x^{AA'} \otimes \ketbra{x}_X,\ \sumno_x q(x)\, \phi_x^{AA'} \otimes \ketbra{x}_X \right) \\
&= \sum_x \sqrt{p(x)\, q(x)}\, F(\psi_x,\phi_x) \\
&\leqt{(ii)} \sum_x \sqrt{p(x)\, q(x)}\, F(\rho_x,\omega_x)\, .
\ee
Here, inequalities~(i) and~(ii) hold both by data processing: in~(i) we dephased system $X$, while in~(ii) we traced away $A'$, defining the states $\omega_x \coloneqq \Tr_{A'} \phi_x^{AA'}$.
\end{proof}

The following result shows that when the first argument of the smooth max-relative entropy is a convex combination of states, we can in fact approximate its value by considering only the worst-case element of the convex combination, incurring only an additive penalty that does not affect the asymptotic behaviour of the function.

\begin{lemma}[(Approximate quasi-concavity of the smooth max-relative entropy)] \label{quasi_concavity_D_max_eps_thm}
Let $(p(x),\rho_x)_{x\in \XX}$ be a finite ensemble of states on the same quantum system, and let $\sigma$ be another state on the same system. Then, for all $\e,\mu\in (0,1)$ with $\e+\mu\leq 1$, it holds that
\begin{align}
\min_x D_{\max}^{\e+\mu,\,P}\!(\rho_x\|\sigma) &\leq \Rel{D_{\max}^{\e,P}}{\sumno_{x\in \XX} p(x)\, \rho_x}{\sigma} + \log \frac{|\XX|}{(\e+\mu)^2 - \e^2}\, . \label{quasi_concavity_D_max_eps}
\end{align}
\end{lemma}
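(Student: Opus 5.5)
Fix $\lambda \coloneqq D_{\max}^{\e,P}\big(\sum_x p(x)\rho_x\,\big\|\,\sigma\big)$ and set $\rho\coloneqq\sum_x p(x)\rho_x$. The plan is to take an optimal smoothed state and split it, using Lemma~\ref{decomposition_lemma}, into pieces that track the individual $\rho_x$. We then show that some piece both carries non-negligible weight and stays close to its $\rho_x$.

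Since we smooth over normalised states and the feasible set is compact, there is a state $\tilde\rho$ with $P(\tilde\rho,\rho)\le\e$, i.e.\ $F(\rho,\tilde\rho)^2\ge 1-\e^2$, and $\tilde\rho\le 2^{\lambda}\sigma$.

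First, apply Lemma~\ref{decomposition_lemma} to the ensemble $(p(x),\rho_x)_x$ with $\omega=\tilde\rho$. This gives a decomposition $\tilde\rho=\sum_x q(x)\,\omega_x$ with
\[
\sum_x \sqrt{p(x)q(x)}\,F(\rho_x,\omega_x)=F(\rho,\tilde\rho)\ge\sqrt{1-\e^2}.
\]
Because every term is positive semidefinite, $q(x)\,\omega_x\le\tilde\rho\le 2^{\lambda}\sigma$, hence $\omega_x\le \frac{2^{\lambda}}{q(x)}\,\sigma$ whenever $q(x)>0$. It therefore suffices to find a single $x$ with $F(\rho_x,\omega_x)^2\ge 1-(\e+\mu)^2$ and $q(x)\ge c\coloneqq\frac{(\e+\mu)^2-\e^2}{|\XX|}$. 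For that $x$, $\omega_x$ is a feasible smoothing of $\rho_x$ at purified distance $\e+\mu$, and so
\[
D_{\max}^{\e+\mu,P}(\rho_x\|\sigma)\le\lambda+\log\frac1c,
\]
which is exactly~\eqref{quasi_concavity_D_max_eps}.

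The main step is a counting argument to locate such an $x$. Apply Cauchy--Schwarz to the fidelity identity, using $\sum_x p(x)=1$:
\[
1-\e^2\le\Big(\sum_x\sqrt{p(x)}\sqrt{q(x)}F(\rho_x,\omega_x)\Big)^2\le\sum_x q(x)\,F(\rho_x,\omega_x)^2 .
\]
Suppose for contradiction that no $x$ has both properties. Let $A$ be the set of $x$ with $F(\rho_x,\omega_x)^2\ge 1-(\e+\mu)^2$, and let $B$ be its complement. Every $x\in A$ then has $q(x)<c$, so, using $F\le1$,
\[
\sum_{x\in A}q(x)F(\rho_x,\omega_x)^2<|A|\,c\le (\e+\mu)^2-\e^2 .
\]
On $B$ we have $\sum_{x\in B}q(x)F(\rho_x,\omega_x)^2\le 1-(\e+\mu)^2$, since $\sum_x q(x)=1$. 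Adding the two bounds gives $\sum_x qF^2<1-\e^2$, a contradiction. Hence some $x$ has both properties.

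I expect this weighting to be the only delicate point. The naive bound $\sum\sqrt{pq}\,F\le f+\sqrt{c|\XX|}$, with $f=\sqrt{1-(\e+\mu)^2}$, yields only the weaker constant $(\sqrt{1-\e^2}-\sqrt{1-(\e+\mu)^2})^2$. Squaring first via Cauchy--Schwarz is what produces the clean denominator $(\e+\mu)^2-\e^2$.

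The edge cases are harmless. The condition $\e+\mu\le1$ makes $1-(\e+\mu)^2\ge0$. The selected $x$ has $q(x)\ge c>0$, so dividing by $q(x)$ is legitimate.
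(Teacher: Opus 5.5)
Your proof is correct and follows the paper's overall structure: an optimal smoothing state $\tilde\rho$, the decomposition $\tilde\rho=\sum_x q(x)\,\omega_x$ from Lemma~\ref{decomposition_lemma}, an index $\bar{x}$ with $q(\bar{x})\geq\delta^2/|\XX|$ and $F(\rho_{\bar{x}},\omega_{\bar{x}})^2\geq 1-(\e+\mu)^2$ where $\delta^2=(\e+\mu)^2-\e^2$, and the operator bound $q(\bar{x})\,\omega_{\bar{x}}\leq\tilde\rho$. The difference is in how $\bar{x}$ is found.

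The paper thresholds on $q$ and bounds the contribution of $\XX_\delta^c$ by Cauchy--Schwarz as $\delta\sqrt{1-p(\XX_\delta)}$. It then renormalises on $\XX_\delta$, applies Cauchy--Schwarz a second time, and minimises $\big(\sqrt{1-\e^2}-\delta\sqrt{1-\lambda}\big)/\sqrt{\lambda}$ over the unknown weight $\lambda=p(\XX_\delta)$ (step~(vii), by reverse Cauchy--Schwarz or calculus) to reach $F\geq\sqrt{1-\e^2-\delta^2}$. Your single inequality $\big(\sum_x\sqrt{p(x)q(x)}\,F_x\big)^2\leq\sum_x q(x)F_x^2$, with $F_x\coloneqq F(\rho_x,\omega_x)$, eliminates $p$ entirely. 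What remains is a pigeonhole argument on a $q$-average, which gives the same constant without any optimisation. So squaring first is not the only way to avoid the loss of the naive bound; the paper avoids it by tracking $p(\XX_\delta)$. It is, however, the shortest way.

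Your route also shows that the constant can be marginally improved. Split by the $q$-threshold and keep $q(\XX_\delta)=1-q(\XX_\delta^c)$ instead of bounding it by $1$. This gives $\max_{x\in\XX_\delta}F_x^2\geq 1-\e^2/(1-\delta^2)$, so you may take $\delta^2=1-\e^2/(\e+\mu)^2$. The additive term then becomes $\log\frac{|\XX|(\e+\mu)^2}{(\e+\mu)^2-\e^2}$.

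One cosmetic point: if $A=\emptyset$, your strict inequality $\sum_{x\in A}q(x)F_x^2<|A|c$ reads $0<0$. In that case the bound on $B$ alone already gives $\sum_x q(x)F_x^2\leq 1-(\e+\mu)^2<1-\e^2$, since $\mu>0$.
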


\begin{proof}
Set $\rho\coloneqq \sum_{x\in \XX} p(x)\,\rho_x$, and let $\omega$ be a state such that 
\bb
P(\rho,\omega) = \e\, ,\qquad D_{\max}(\omega\|\sigma) = D_{\max}^{\e,P} (\rho\|\sigma)\, .
\label{quasi_concavity_D_max_eps_proof_eq0}
\ee
By Lemma~\ref{decomposition_lemma}, there exists a convex decomposition of $\omega$ as
\bb
\omega = \sum_{x\in \XX} q(x)\,\omega_x
\label{quasi_concavity_D_max_eps_proof_eq0_bis}
\ee
such that
\bb
\sum_{x\in \XX} \sqrt{p(x)\, q(x)}\, F(\rho_x,\omega_x) = F\left(\rho, \omega \right) = \sqrt{1 - P(\rho,\omega)^2} = \sqrt{1 - \e^2}\, .
\ee
Now, for some $\delta\in \Big[0,\sqrt{1-\e^2}\Big)$ to be fixed later, define the set
\bb
\XX_\delta \coloneqq \left\{ x\in \XX:\ q(x)\geq\frac{\delta^2}{|\XX|} \right\} .
\label{quasi_concavity_D_max_eps_proof_eq2}
\ee
We can now write
\bb
\sqrt{1 - \e^2} &= \sum_{x\in \XX} \sqrt{p(x)\, q(x)}\, F(\rho_x,\omega_x) \\
&= \sum_{x\in \XX_\delta} \sqrt{p(x)\, q(x)}\, F(\rho_x,\omega_x) + \sum_{x\in \XX_\delta^c} \sqrt{p(x)\, q(x)}\, F(\rho_x,\omega_x) \\
&\leqt{(i)} \sum_{x\in \XX_\delta} \sqrt{p(x)\, q(x)}\, F(\rho_x,\omega_x) + \sum_{x\in \XX_\delta^c} \sqrt{p(x)\, q(x)} \\
&\leqt{(ii)} \sum_{x\in \XX_\delta} \sqrt{p(x)\, q(x)}\, F(\rho_x,\omega_x) + \delta \sqrt{\sumno_{x\in \XX_\delta^c} p(x)} \\
&\eqt{(iii)} \sum_{x\in \XX_\delta} \sqrt{p(x)\, q(x)}\, F(\rho_x,\omega_x) + \delta \sqrt{1-p(\XX_\delta)} \\
&\eqt{(iv)} \sqrt{p(\XX_\delta)} \sum_{x\in \XX_\delta} \sqrt{\frac{p(x)\, q(x)}{p(\XX_\delta)}}\, F(\rho_x,\omega_x) + \delta \sqrt{1-p(\XX_\delta)}\, .
\label{quasi_concavity_D_max_eps_proof_eq3}
\ee
Here, in~(i) we upper bounded $F(\rho_x,\omega_x)\leq 1$ for all $x\in \XX_\delta^c$, in~(ii) we used the Cauchy--Schwarz inequality, and in~(iii) we defined 
\bb
p(\XX_\delta) \coloneqq \sum_{x\in \XX_\delta} p(x) = 1 - \sum_{x\in \XX_\delta^c} p(x)\, ,
\label{quasi_concavity_D_max_eps_proof_eq4}
\ee
observing in~(iv) that we need to have $p(\XX_\delta)>0$, as $p(\XX_\delta) = 0$ would imply that $\delta \geq \sqrt{1-\e^2}$, contrary to our assumption.

Continuing, from~\eqref{quasi_concavity_D_max_eps_proof_eq3} we obtain that
\bb
\sum_{x\in \XX_\delta} \sqrt{\frac{p(x)\, q(x)}{p(\XX_\delta)}}\, F(\rho_x,\omega_x) \geq \frac{\sqrt{1-\e^2} - \delta \sqrt{1-p(\XX_\delta)}}{\sqrt{p(\XX_\delta)}}\, .
\label{quasi_concavity_D_max_eps_proof_eq5}
\ee
Since
\bb
\sum_{x\in \XX_\delta} \sqrt{\frac{p(x)\, q(x)}{p(\XX_\delta)}} \,\leqt{(v)}\, \sqrt{\sumno_{x\in \XX_\delta} \frac{p(x)}{p(\XX_\delta)}}\, \sqrt{\sumno_{x\in \XX_\delta} q(x)} \,\leqt{(vi)}\, 1\, ,
\ee
where (v)~is again by Cauchy--Schwarz while in~(vi) we remembered~\eqref{quasi_concavity_D_max_eps_proof_eq4},  from~\eqref{quasi_concavity_D_max_eps_proof_eq5} it follows that there exists some $\xbar\in \XX_\delta$ with the property that
\bb
F\big(\rho_{\xbar},\omega_{\xbar}\big)\, &\geq\ \frac{\sqrt{1-\e^2} - \delta \sqrt{1-p(\XX_\delta)}}{\sqrt{p(\XX_\delta)}} \\
&\geq\ \inf_{\lambda\in (0,1]} \frac{\sqrt{1-\e^2} - \delta \sqrt{1-\lambda}}{\sqrt{\lambda}} \\
&\eqt{(vii)}\ \sqrt{1-\e^2-\delta^2} \, .
\ee
To prove~(vii), we can introduce the parameter $x\coloneqq \sqrt{\frac{1-\lambda}{\lambda}} \in [0,\infty)$, in terms of which we need to show that
\bb
\inf_{x \in [0,\infty)} \left\{ \sqrt{1-\e^2}\, \sqrt{1+x^2} - \delta x \right\} \eqt{?} \sqrt{1-\e^2-\delta^2}\, .
\ee
This can be swiftly proved by applying the reversed Cauchy--Schwarz inequality~\cite{Aczel1956} in the Minkowski space with scalar product $\lsmatrix 1 & \\ & -1 \rsmatrix$ to the two vectors $\big(\sqrt{1+x^2},\,x\big)$ and $\big(\sqrt{1-\e^2},\, \delta\big)$, observing that saturation occurs in the parallel case. A more traditional approach is as follows: the function on the left-hand side is clearly strictly convex in $x$, as $\sqrt{1+x^2} = \| (1,x) \|_2$, diverges as $x\to +\infty$, and has non-positive derivative at $x=0$; therefore, the minimum exists, is unique, and can be found by setting the derivative to zero. Doing so shows, once again, that the minimum is located at $x = \frac{\delta}{\sqrt{1-\e^2-\delta^2}}$ and has the value given in~(vii).

Going back to the purified distance, we see that
\bb
P\big(\rho_{\xbar},\omega_{\xbar}\big) \leq \sqrt{\e^2+\delta^2}\, .
\label{quasi_concavity_D_max_eps_proof_eq8}
\ee
We can now choose 
\bb
\delta \coloneqq \sqrt{(\e+\mu)^2 - \e^2}\, ,
\label{quasi_concavity_D_max_eps_proof_eq9}
\ee
so that the right-hand side of the above inequality is equal to $\e+\mu$. This lets us rephrase~\eqref{quasi_concavity_D_max_eps_proof_eq8} as
\bb
P\big(\rho_{\xbar},\omega_{\xbar}\big) \leq \e + \mu\, .
\label{quasi_concavity_D_max_eps_proof_eq10}
\ee

Now, we have
\bb
\min_x D_{\max}^{\e+\mu,\ P}\!(\rho_x\|\sigma)\, &\leq\, \rel{D_{\max}^{\e+\mu}}{\rho_{\xbar}}{\sigma} \\
&\leqt{(viii)}\, \rel{D_{\max}}{\omega_{\xbar}}{\sigma} \\
&\leqt{(ix)}\, D_{\max}(\omega\|\sigma) + \log\frac{1}{q(\xbar)} \\
&\leqt{(x)}\, D_{\max}^{\e,P}(\rho\|\sigma) + \log\frac{|\XX|}{\delta^2}\, ,
\label{quasi_concavity_D_max_eps_proof_eq11}
\ee
where (viii)~is a direct consequence of~\eqref{quasi_concavity_D_max_eps_proof_eq10}, (ix)~can be justified by observing that
\bb
\omega_{\xbar} \leq \frac{1}{q(\xbar)} \sum_x q(x)\omega_x = \frac{\omega}{q(\xbar)} \leq \frac{\exp\left[D_{\max}(\omega\|\sigma)\right]}{q(\xbar)}\, \sigma
\ee
(in other words, by using the triangle inequality for the max-relative entropy), and, finally, in~(x) we used~\eqref{quasi_concavity_D_max_eps_proof_eq0} and remembered that $\xbar\in \XX_\delta$, where $\XX_\delta$ is defined in~\eqref{quasi_concavity_D_max_eps_proof_eq2}. Here, $\delta$ is given by~\eqref{quasi_concavity_D_max_eps_proof_eq9}; performing this substitution in~\eqref{quasi_concavity_D_max_eps_proof_eq11} yields precisely the claimed inequality~\eqref{quasi_concavity_D_max_eps} and thereby concludes the proof.
\end{proof}


By using the `weak/strong converse duality' bounds connecting the smoothed max-relative entropy and the hypothesis testing relative entropy, first established in~\cite{Tomamichel2013} and then refined in~\cite{Anshu2019, tight-relations}, we are now able to deduce the following.

\begin{cor} \label{quasi_concavity_D_H_cor}
Let $(p(x),\rho_x)_{x\in \XX}$ be a finite ensemble of states on the same quantum system, and let $\sigma$ be another state on the same system. Then, for all $\e\in (0,1)$ and all $\delta\in (0,\e)$, we have
\bb
D_H^\e \Big(\sumno_{x\in \XX} p(x)\,\rho_x \,\Big\|\, \sigma\Big) &\geq \min_x D_H^{\e-\delta}(\rho_x\|\sigma) - \log \big(|\XX|\, G(\e,\delta)\big)\, , \label{quasi_concavity_D_H}
\ee
where
\bb
G(\e,\delta) &\coloneqq \inf_{\nu\in (0,\delta)} \frac{(1-\e)\, F_2\left(\e-\delta,\, 1-\e+\nu\right)}{\nu(\delta-\nu)^2}\, ,
\ee
with $F_2(p,q) \coloneqq \left(\sqrt{pq} + \sqrt{(1-p)(1-q)}\right)^2$ being the binary fidelity function, is a positive constant that depends only on $\e$ and $\delta$. In particular, 
\bb
D_H^\e \Big(\sumno_{x\in \XX} p(x)\,\rho_x \,\Big\|\, \sigma\Big) &\geq \min_x D_H^{\e-\delta}(\rho_x\|\sigma) - \log |\XX| - \log \frac{27 \e (1-\e) (1-\e+\delta)}{\delta^3}\, . \label{quasi_concavity_D_H_simplified}
\ee
Hence, for $\delta = \kappa \e$ with $\kappa\in (0,1)$, we obtain
\bb
D_H^\e \Big(\sumno_{x\in \XX} p(x)\,\rho_x \,\Big\|\, \sigma\Big) &\geq \min_x D_H^{(1-\kappa)\e}(\rho_x\|\sigma) - \log\frac{|\XX|}{\kappa^3 \e^2}\, , 
\label{quasi_concavity_D_H_weak_converse}
\ee
while, for $\delta = \kappa(1-\e)$ with $\kappa\in (0,1)$,
\bb
D_H^\e \Big(\sumno_{x\in \XX} p(x)\,\rho_x \,\Big\|\, \sigma\Big) &\geq \min_x D_H^{\e-\kappa(1-\e)}(\rho_x\|\sigma) - \log \frac{|\XX| (1+\kappa)}{\kappa^3 (1-\e)}\, . 
\label{quasi_concavity_D_H_strong_converse}
\ee
\end{cor}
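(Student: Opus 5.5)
The plan is to sandwich the approximate quasi-concavity of Lemma~\ref{quasi_concavity_D_max_eps_thm} between the two directions of the weak/strong converse duality relating $D_H$ and $D_{\max}^{\e,P}$~\cite{Tomamichel2013,Anshu2019,tight-relations}. Set $\rho \coloneqq \sum_x p(x)\rho_x$. First, an upper bound on $D_H^\e(\rho_x\|\sigma)$-type quantities in terms of smooth max-relative entropy: for each $x$,
\bb
D_H^{\e-\delta}(\rho_x\|\sigma) \leq D_{\max}^{\eta,P}(\rho_x\|\sigma) + \log(\text{const}),
\ee
with $\eta$ chosen so that the pair $(\e-\delta,\eta)$ is admissible, e.g.\ $\eta^2$ related to $1-F_2(\e-\delta,\cdot)$. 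Second, the reverse direction for the mixture:
\bb
D_{\max}^{\e',P}(\rho\|\sigma) \leq D_H^{\e}(\rho\|\sigma) + \log\frac{1-\e}{\nu} ,
\ee
valid for $\nu\in(0,1-\e)$-type parameters with $\e'$ determined by $\e$ and $\nu$ (the refined form in~\cite{tight-relations} gives $\e'^2 = 1 - F_2(\e, 1-\e+\nu)$-type expressions, which explains the appearance of $F_2$ in $G$).

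Chaining these bounds gives
\bb
\min_x D_H^{\e-\delta}(\rho_x\|\sigma) &\leq \min_x D_{\max}^{\e'+\mu,P}(\rho_x\|\sigma) + c_1\\
&\leq D_{\max}^{\e',P}(\rho\|\sigma) + \log\frac{|\XX|}{(\e'+\mu)^2-\e'^2} + c_1\\
&\leq D_H^\e(\rho\|\sigma) + \log\frac{|\XX|}{(\e'+\mu)^2-\e'^2} + \log\frac{1-\e}{\nu} + c_1 .
\ee
Here the middle step is Lemma~\ref{quasi_concavity_D_max_eps_thm}. The parameters $\e',\mu$ must be tuned so that $\e'+\mu$ is exactly the smoothing that the $D_H^{\e-\delta}$-to-$D_{\max}$ conversion allows. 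The natural split is to spend $\nu$ on the $D_{\max}\to D_H$ conversion for the mixture and the remaining $\delta-\nu$ on the gap between $\e$ and $\e-\delta$. The penalty $(\e'+\mu)^2-\e'^2$ should then evaluate to something like $1-F_2(\e-\delta,1-\e+\nu)$ minus the analogous quantity, producing the factor $(\delta-\nu)^2/F_2(\e-\delta,1-\e+\nu)$ after using the form of the conversion bound that has $F_2$ in the numerator. Optimising over $\nu$ yields $G(\e,\delta)$.

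The main obstacle is exactly this bookkeeping: choosing which precise version of the duality inequalities from~\cite{tight-relations} to use, on which side, so that the smoothing parameters match and the constant collapses to $(1-\e)F_2/(\nu(\delta-\nu)^2)$. I would start from the statement in~\cite{tight-relations}: $D_H^{\e}(\rho\|\sigma)\ge D_{\max}^{\sqrt{1-F_2(\cdot)},P}(\rho\|\sigma)-\log$, together with the converse $D_H^{1-\cdots}$ bound. I would then match parameters by the identity that $\sqrt{1-F_2(a,b)}$ is the purified distance between binary distributions, and use the triangle inequality for the purified distance to bound the increment $\mu$ below by $\delta-\nu$ in the appropriate scale.

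For the simplified bound~\eqref{quasi_concavity_D_H_simplified}, I would pick $\nu=\delta/3$ and bound $F_2(\e-\delta,1-\e+\nu)$ crudely. One gets $\sqrt{pq}+\sqrt{(1-p)(1-q)}\le$ something giving $F_2\le \e(1-\e+\delta)/(1-\e)$-type estimates up to constants, which yields $27\e(1-\e)(1-\e+\delta)/\delta^3$. The two special cases then follow by substituting $\delta=\kappa\e$ and $\delta=\kappa(1-\e)$ and simplifying with $\e,1-\e\le 1$.
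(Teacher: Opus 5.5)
Your architecture matches the paper's: the two halves of the duality \cite[Eq.~(96)]{tight-relations} sandwiching Lemma~\ref{quasi_concavity_D_max_eps_thm}. The gap is in the bookkeeping you flag as the main obstacle, and that bookkeeping is the entire content of $G(\e,\delta)$. As you allocate the parameters, the chain rests on an invalid inequality and would not collapse to $G$.

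The mixture-side conversion has no free parameter. The first inequality of \cite[Eq.~(96)]{tight-relations} with $\e\mapsto1-\e$ is simply $D_H^\e(\rho\|\sigma)\geq D_{\max}^{\sqrt{1-\e},P}(\rho\|\sigma)+\log\frac{1}{1-\e}$. Your guessed version, with smoothing $\e'^2=1-F_2(\e,1-\e+\nu)$ at cost $\log\frac{1-\e}{\nu}$, is false. At $\e=\frac12$ it gives $\e'\approx\nu$. Take commuting $\rho=\mathrm{diag}(1-a,a)$ and $\sigma=\mathrm{diag}(1-\eta,\eta)$ with fixed $\nu^2\ll a<\frac12$. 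Then the left-hand side diverges as $\eta\to0$, while $D_H^{1/2}(\rho\|\sigma)$ stays bounded.

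The parameter $\nu$ enters only through Lemma~\ref{quasi_concavity_D_max_eps_thm}. Apply it with smoothing $\sqrt{1-\e}$ and increment $\mu=\sqrt{1-\e+\nu}-\sqrt{1-\e}$. Then $(\e'+\mu)^2-\e'^2=\nu$ exactly, the penalty is $\log\frac{|\XX|}{\nu}$, and the admissibility condition $\e'+\mu\leq1$ is just $\nu\leq\e$.

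The $F_2$ factor is not part of that penalty, nor of any smoothing parameter: it is your $c_1$. The second inequality of \cite[Eq.~(96)]{tight-relations}, with $\e\mapsto1-\e+\delta$ and $\mu\mapsto\delta-\nu$, reads
\[
D_H^{\e-\delta}(\rho_x\|\sigma)\leq D_{\max}^{\sqrt{1-\e+\nu},P}(\rho_x\|\sigma)+\log\frac{F_2(\e-\delta,\,1-\e+\nu)}{(\delta-\nu)^2}.
\]
Its smoothing $\sqrt{1-\e+\nu}$ is exactly what the lemma outputs, so no triangle inequality for the purified distance is needed. The three additive terms combine into the penalty $\log\frac{|\XX|\,(1-\e)\,F_2(\e-\delta,1-\e+\nu)}{\nu(\delta-\nu)^2}$, and optimising over $\nu\in(0,\delta)$ yields $G$. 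With your allocation you would pay $F_2/(\delta-\nu)^2$ inside the lemma's penalty and then $c_1$ on top, so the constant could not come out as $G$.

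For the simplified bound, $\nu=\delta/3$ is the right choice, since $\max_\nu\nu(\delta-\nu)^2=\frac{4\delta^3}{27}$. The estimate you need is $F_2(\e-\delta,1-\e+\nu)\leq4\e(1-\e+\delta)$. It follows by bounding $\e-\delta,\ \e-\nu\leq\e$ and $1-\e+\nu\leq1-\e+\delta$ inside the two square roots. Your estimate $\e(1-\e+\delta)/(1-\e)$ is false, since $F_2\approx4\e$ when $\delta\ll\e\ll1$.

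Finally, substituting $\delta=\kappa\e$ or $\delta=\kappa(1-\e)$ into the simplified bound gives $\log\frac{27|\XX|}{\kappa^3\e^2}$ and $\log\frac{27|\XX|(1+\kappa)}{\kappa^3(1-\e)}$ respectively. The bounds $\e,1-\e\leq1$ cannot absorb the $27$. So the two displayed special cases follow this way only with $|\XX|$ replaced by $27|\XX|$. The paper's one-word \emph{Hence} passes over the same constant. This is harmless downstream, since Proposition~\ref{first_Stein_reduction_prop} only uses $G(\e,\delta)$.
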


\begin{proof}
It follows by combining Theorem~\ref{quasi_concavity_D_max_eps_thm} with the bounds in~\cite[Eq.~(96)]{tight-relations}. Specifically, for some $\nu\in (0,\delta)$ to be fixed later, we write
\bb
D_H^\e \Big(\sumno_{x\in \XX} p(x)\,\rho_x \,\Big\|\, \sigma\Big) &\geqt{(i)} D_{\max}^{\sqrt{1-\e},\,P} \Big(\sumno_{x\in \XX} p(x)\,\rho_x \,\Big\|\, \sigma\Big) + \log\frac{1}{1-\e} \\
&\geqt{(ii)} \min_x D_{\max}^{\sqrt{1 -\e+\nu},\, P}\!(\rho_x\|\sigma) - \log \frac{(1-\e) |\XX|}{\nu} \\
&\geqt{(iii)} \min_x D_H^{\e-\delta}(\rho_x\|\sigma) - \log \frac{F_2\left(\e-\delta,\, 1-\e+\nu\right)}{(\delta-\nu)^2} - \log \frac{(1-\e) |\XX|}{\nu}\, .
\ee
Here, in~(i) we employed the first inequality in~\cite[Eq.~(96)]{tight-relations}, with the substitution $\e\mapsto 1-\e$, (ii)~follows from Theorem~\ref{quasi_concavity_D_max_eps_thm} once one observes that $\nu < \delta <\e$, and, finally, in~(iii) we used the second inequality in~\cite[Eq.~(96)]{tight-relations} with the substitutions $\e\mapsto 1-\e+\delta$ and $\mu \mapsto \delta -\nu$. Optimising over $\nu$ yields the claim~\eqref{quasi_concavity_D_H}. The bound in~\eqref{quasi_concavity_D_H_simplified} is obtained by writing
\bb
F_2\left(\e-\delta,\, 1-\e+\nu\right) &= \left( \sqrt{(\e-\delta)(1-\e+\nu)} + \sqrt{(1-\e+\delta)(\e-\nu)} \right)^2 \\
&\leq \left( \sqrt{\e(1-\e+\delta)} + \sqrt{(1-\e+\delta)\e} \right)^2 \\
&= 4 \e(1-\e+\delta)\, ,
\ee
and by noticing that 
\bb
\inf_{\nu\in (0,\delta)} \frac{4}{\nu(\delta-\nu)^2} = \frac{27}{\delta^3}\, ,
\ee
as an elementary calculation swiftly reveals.
\end{proof}

\section{A first black-box reduction of Stein exponents}\label{sec:reduction}

We now move on to the Stein exponent problem, where 
Corollary~\ref{quasi_concavity_D_H_cor} gives us the following.

\begin{prop} \label{first_Stein_reduction_prop}
Let $\HH$ be a finite-dimensional Hilbert space. Consider two sequences $\RR = ( \RR_n)_n$ and $\SS = (\SS_n)_n$ of sets $\RR_n,\SS_n \subseteq \D\big(\HH^{\otimes n}\big)$ of states on the $n$-copy system. Assume that $\co(\RR_n) = \co(\FF_n)$ can be written as the convex hull of some restricted set $\FF_n \subseteq \RR_n$, and that $\FF_n$, and thus $\RR_n$, contain only permutationally symmetric states, i.e.
\bb
U_\pi^{\vphantom{\dag}} \rho_n U_\pi^\dag = \rho_n \qquad \forall\ \rho_n \in \FF_n\, ,\quad \forall\ \pi\in S_n\, ,
\ee
where $U_\pi$ is the unitary that permutes the tensor factors of $\HH^{\otimes n}$ according to the permutation $\pi$. Then, with the notation in~\eqref{Stein}--\eqref{strong_converse_Stein}, we have
\begin{align}
\stein (\RR\|\SS) &= \lim_{\e\to 0^+} \liminf_{n\to\infty} \frac1n \inf_{\rho_n \in \FF_n} D_H^\e(\rho_n\| \co(\SS_n))\, , \label{first_Stein_reduction} \\
\stein^\dag(\RR \|\SS) &= \lim_{\e\to 1^-} \limsup_{n\to\infty} \frac1n \inf_{\rho\in \FF_n} D_H^\e(\rho_n\| \co(\SS_n))\, . \label{first_Stein_reduction_strong_converse}
\end{align}
In particular, if $\RR_n = \FF_n = \RR_n^{\iid}$ for all $n$, where the notation is defined in~\eqref{composite_iid}, we have
\begin{align}
\rel{\stein}{\RR^\iid}{\SS} &= \lim_{\e\to 0^+} \liminf_{n\to\infty} \frac1n \inf_{\rho \in \RR_1} \rel{D_H^\e}{\rho^{\otimes n}}{\co(\SS_n)}\, , \label{first_Stein_reduction_composite_iid} \\
\rel{\stein^\dag\!}{\RR^\iid}{\SS} &= \lim_{\e\to 1^-} \limsup_{n\to\infty} \frac1n \inf_{\rho \in \RR_1} \rel{D_H^\e}{\rho^{\otimes n}}{\co(\SS_n)}\, . \label{first_Stein_reduction_strong_converse_composite_iid}
\end{align}
\end{prop}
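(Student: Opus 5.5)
The plan is to prove two inequalities. The easy direction (Stein at most the right-hand side) is immediate. The hard direction reduces a convex combination to a finite ensemble of permutation-symmetric states of polynomial size, and then applies Corollary~\ref{quasi_concavity_D_H_cor}.

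\emph{Upper bound.} Since $\FF_n\subseteq\RR_n\subseteq\co(\RR_n)$, we have
\[
D_H^\e(\co(\RR_n)\|\co(\SS_n))\leq \inf_{\rho_n\in\FF_n} D_H^\e(\rho_n\|\co(\SS_n)).
\]
Taking limits then gives "$\leq$" in both~\eqref{first_Stein_reduction} and~\eqref{first_Stein_reduction_strong_converse}.

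\emph{Lower bound: reduction to a finite ensemble.} Fix $\omega_n\in\co(\FF_n)$ and $\sigma_n\in\co(\SS_n)$. By Carath\'eodory, $\omega_n=\sum_x p(x)\rho_{n,x}$ is a finite mixture with $\rho_{n,x}\in\FF_n$. The number of terms is only bounded by the real dimension, which is exponential in $n$, so this alone does not suffice. The key point is to use permutation symmetry. Every element of $\FF_n$ lies in the commutant of $\{U_\pi\}$, i.e. in the span of the symmetric algebra. By Schur--Weyl duality this is a real vector space of dimension at most $\binom{n+d^2-1}{d^2-1}\leq (n+1)^{d^2}$, with $d=\dim\HH$. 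Applying Carath\'eodory inside the affine space of unit-trace operators in this subspace, $\omega_n$ can be written as a mixture of at most $|\XX_n|\leq (n+1)^{d^2}+1$ states from $\FF_n$.

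Corollary~\ref{quasi_concavity_D_H_cor} then yields
\[
D_H^\e(\omega_n\|\sigma_n)\geq \min_x D_H^{\e-\delta}(\rho_{n,x}\|\sigma_n)-\log|\XX_n|-\log\frac{27\e(1-\e)(1-\e+\delta)}{\delta^3}.
\]
The right-hand side is at least $\inf_{\rho_n\in\FF_n}D_H^{\e-\delta}(\rho_n\|\co(\SS_n))-O(\log n)-c(\e,\delta)$. Infimising over $\omega_n$ and $\sigma_n$ on the left, dividing by $n$, and taking $\liminf$ removes the logarithmic and constant penalties.

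\emph{Taking limits in $\e$.} For the Stein exponent, choose $\delta=\kappa\e$ using~\eqref{quasi_concavity_D_H_weak_converse}. This gives
\[
\liminf_n \tfrac1n D_H^\e(\co(\RR_n)\|\co(\SS_n))\geq \liminf_n \tfrac1n\inf_{\FF_n}D_H^{(1-\kappa)\e}.
\]
As $\e\to0^+$, $(1-\kappa)\e\to 0^+$ as well. Both sides are monotone in $\e$, so the limits coincide. For the strong converse exponent, use~\eqref{quasi_concavity_D_H_strong_converse} with $\e'=\e-\kappa(1-\e)$. Then $\e'\to1^-$ as $\e\to1^-$, and the constant penalty $\log\frac{(1+\kappa)}{\kappa^3(1-\e)}$ is fixed for each $\e$, so it vanishes after division by $n$. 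Hence the limsup version follows identically.

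\emph{Specialisation.} The i.i.d.\ statements follow by taking $\FF_n=\RR_n^\iid$, which is evidently permutation invariant.

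\emph{Main obstacle.} The only delicate step is the polynomial bound on the ensemble size. I expect to justify it by noting that the map $\omega\mapsto$ (its restriction to the symmetric subalgebra) is faithful on permutation-invariant operators. This subalgebra is spanned by symmetrised products of basis operators, and those are counted by multisets of size $n$ from $d^2$ elements.
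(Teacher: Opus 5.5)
Your proposal is correct and is essentially the paper's proof. One direction is the trivial inclusion $\FF_n\subseteq\co(\RR_n)$. The other uses Carath\'eodory inside the polynomial-dimensional space of permutation-invariant Hermitian operators, then Corollary~\ref{quasi_concavity_D_H_cor}, then the limits in $n$ and $\e$; the paper takes a generic $\delta$ plus monotonicity for $\stein$ and $\delta=\e(1-\e)$ for $\stein^\dag$, which is interchangeable with your choices.

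The main difference is the dimension count. Your multiset count $\binom{n+d^2-1}{d^2-1}$ is in fact the exact dimension. The paper instead uses a Schur--Weyl estimate that, as printed, omits squaring $\dim\pazocal{U}_\lambda$; the correct bound is $(n+1)^{d^2-1}$. This is harmless, since only polynomial growth in $n$ matters.
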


\begin{proof}
To prove that the left-hand sides of~\eqref{first_Stein_reduction} and~\eqref{first_Stein_reduction_strong_converse} cannot be larger than the corresponding right-hand sides, simply write, for all $\e\in (0,1)$,
\bb
\frac1n\, D_H^\e(\co(\RR_n)\| \co(\SS_n)) &\leq \frac1n \inf_{\rho_n \in \FF_n} D_H^\e(\rho_n\| \co(\SS_n))\, ,
\label{first_Stein_reduction_proof_eq1}
\ee
where the inequality follows from the observation that $\FF_n \subseteq \RR_n \subseteq \co(\RR_n)$. Taking the limit infimum as $n\to\infty$ and then the limit $\e\to 0^+$ shows that
\bb
\stein(\RR\|\SS) \leq \lim_{\e\to 0^+} \liminf_{n\to\infty} \frac1n \inf_{\rho_n \in \FF_n} D_H^\e(\rho_n\| \co(\SS_n))\, ;
\label{first_Stein_reduction_proof_eq2}
\ee
an analogous inequality holds for the strong converse Stein exponent, as can be verified by taking, in~\eqref{first_Stein_reduction_proof_eq1}, the limit superior as $n\to \infty$ and then the limit $\e\to 1^-$. It thus remains to prove that the left-hand sides of~\eqref{first_Stein_reduction} and~\eqref{first_Stein_reduction_strong_converse} cannot be smaller than the right-hand sides. 

For some $n\in \N^+$, consider a $\rho_n\in \co(\RR_n) = \co(\FF_n)$, and let $d\coloneqq \dim \HH$ denote the dimension of the underlying Hilbert space. Note that $\FF_n$ is entirely contained in the real vector space $\mathrm{H}_{d,n}^{\mathrm{sym}} \vphantom{\scaleobj{2}{p}}$ of permutationally symmetric Hermitian operators on $\HH^{\otimes n} \simeq \big(\C^d\big)^{\otimes n}$. By Schur--Weyl duality, this has the form~\cite[Eq.~(6.15)]{HAYASHI-GROUP}
\bb
\mathrm{H}_{d,n}^{\mathrm{sym}} = \bigoplus_{\lambda \in \YY^d_n} \pazocal{U}_\lambda \otimes \id_{\pazocal{V}_\lambda}\, ,
\ee
where $\lambda$ is an index ranging on the set $\YY^d_n$ of Young diagrams with size $n$ and depth at most $d$, and $\pazocal{U}_\lambda$ and $\pazocal{V}_\lambda$ are irreps of the special unitary group $\mathrm{SU}(d)$ and of the symmetric group $S_n$, respectively. Since the inequalities~\cite[Eq.~(6.16) and~(6.18)]{HAYASHI-GROUP}
\bb
\dim \pazocal{U}_\lambda \leq (n+1)^{d(d-1)/2} ,\qquad |\YY^d_n| \leq (n+1)^{d-1}
\ee
hold, we obtain that
\bb
N \coloneqq \dim \mathrm{H}_{d,n}^{\mathrm{sym}} \leq (n+1)^{d-1} (n+1)^{d(d-1)/2} = (n+1)^{(d-1)\left(\frac{d}{2}+1\right)} .
\label{first_Stein_reduction_proof_eq5}
\ee
Remembering that $\rho_n\in \co(\FF_n)$, and applying Carath\'eodory's theorem inside the affine space composed by all operators in $\mathrm{H}_{d,n}^{\mathrm{sym}} \vphantom{\scaleobj{1.5}{|}}$ with unit trace, which has dimension $N-1$, we see that we can find a finite collection $(\rho_{n,x})_{x=1,\ldots,N}$ of states $\rho_{n,x}\in \FF_n$ and a corresponding probability distribution $(p_n(x))_{x=1,\ldots,N}$ with the property that
\bb
\rho_n = \sum_{x=1}^N p_n(x)\, \rho_{n,x}\, .
\label{discretisation_perm_symm}
\ee
(This discretisation argument is inspired from~\cite[Lemma~A.3]{brandao_adversarial}, where the special case of a composite i.i.d.\ null hypothesis is tackled; the particular way in which we apply Carath\'eodory's theorem is slightly different here, meaning that we obtain a marginally better estimate on $N$.)

Now, let $\e\in (0,1)$, and pick some $\delta \in (0,\e)$. For any fixed $\sigma_n\in \co(\SS_n)$, we can write
\bb
D_H^\e(\rho_n\|\sigma_n) &= \Rel{D_H^\e}{\sumno_{x=1}^N p_n(x)\, \rho_{n,x}}{\sigma_n} \\
&\geqt{(i)} \min_{x\in \{1,\ldots,N\}} D_H^{\e-\delta}(\rho_{n,x} \| \sigma_n) - \log \frac{N}{G(\e,\delta)} \\
&\geqt{(ii)} \inf_{\omega_n\in \FF_n} D_H^{\e-\delta}(\omega_n \| \co(\SS_n)) - \log \frac{N}{G(\e,\delta)} \\
&\geqt{(iii)} \inf_{\omega_n\in \FF_n} D_H^{\e-\delta}(\omega_n \| \co(\SS_n)) - \log \frac{1}{G(\e,\delta)} - (d-1)\left(\tfrac{d}{2}+1\right) \log(n+1)\, .
\ee
Here, in~(i) we applied Corollary~\ref{quasi_concavity_D_H_cor}, in~(ii) we relaxed the minimisation to arbitrary states $\omega_n\in \FF_n$ and $\sigma_n\in \co(\SS_n)$, and finally in~(iii) we employed the polynomial bound on $N$ reported in~\eqref{first_Stein_reduction_proof_eq5}.

We can now take the infimum on the leftmost side with respect to $\rho_n\in \co(\RR_n)$ and $\sigma_n\in \co(\SS_n)$, which gives us
\bb
&\rel{D_H^\e}{\co(\RR_n)}{\co(\SS_n)} \\
&\qquad \geq \inf_{\omega_n\in \FF_n} D_H^{\e-\delta}(\omega_n \| \co(\SS_n)) - \log \frac{1}{G(\e,\delta)} - (d-1)\left(\tfrac{d}{2}+1\right) \log(n+1)\, .
\label{first_Stein_reduction_proof_eq8}
\ee
Dividing by $n$ and taking the limit infimum as $n\to\infty$ yields immediately
\bb
\liminf_{n\to\infty} \frac1n\, \rel{D_H^\e}{\co(\RR_n)}{\co(\SS_n)} &\geq \liminf_{n\to\infty} \frac1n \inf_{\omega_n\in \FF_n} D_H^{\e-\delta}(\omega_n \| \co(\SS_n)) \\
&\geqt{(iv)} \lim_{\e'\to 0^+} \liminf_{n\to\infty} \frac1n \inf_{\omega_n\in \FF_n} D_H^{\e'}(\omega_n \| \co(\SS_n))\, ,
\ee
where in~(iv) we remembered that the hypothesis testing relative entropy is monotonically non-decreasing in the type-I error parameter for any fixed pair of states. Taking the limit $\e\to 0^+$ proves that
\bb
\stein(\RR\|\SS) &\geq \lim_{\e'\to 0^+} \liminf_{n\to\infty} \frac1n \inf_{\omega_n\in \FF_n} D_H^{\e'}(\omega_n \| \co(\SS_n))\, ,
\ee
which, together with~\eqref{first_Stein_reduction_proof_eq2}, completes  the proof of~\eqref{first_Stein_reduction}.

To establish~\eqref{first_Stein_reduction_strong_converse}, we need to go back to~\eqref{first_Stein_reduction_proof_eq8}, where, upon the usual division by $n$, we now take the limit superior as $n\to\infty$. Doing so leads us to the inequality
\bb
\stein^\dag(\RR\|\SS) &\geq \limsup_{n\to\infty} \frac1n\, \rel{D_H^\e}{\co(\RR_n)}{\co(\SS_n)} \\
&\geq \limsup_{n\to\infty} \frac1n\, \inf_{\omega_n\in \FF_n} D_H^{\e-\delta}(\omega_n \| \co(\SS_n))\, .
\ee
Setting 
\bb
\delta \coloneqq \e(1-\e)\, ,
\ee
we obtain that
\bb
\stein^\dag(\RR\|\SS) &\geq \limsup_{n\to\infty} \frac1n\, \inf_{\omega_n\in \FF_n} D_H^{\e^2}(\omega_n \| \co(\SS_n))\, .
\ee
Defining $\e' \coloneqq \e^2$ and taking the limit $\e\to 1^-$ then shows that
\bb
\stein^\dag(\RR\|\SS) &\geq \lim_{\e'\to 1^-}\limsup_{n\to\infty} \frac1n\, \inf_{\omega_n\in \FF_n} D_H^{\e'}(\omega_n \| \co(\SS_n))\, .
\ee
Since we already established the converse bound (analogous to~\eqref{first_Stein_reduction_proof_eq2}) 
the proof is complete.
\end{proof}

\section{Proof of the main result}\label{sec:mainproof}

The simplified expression in~\eqref{first_Stein_reduction_composite_iid} is promising, as the minimisation over $\co\big(\RR_n^{\iid}\big)$ in the first argument of the hypothesis testing relative entropy has been reduced to an easier-to-handle optimisation over i.i.d.\ states of the form $\rho^{\otimes n}$, where $\rho\in \RR_1$. The beauty of this reduction argument lies in its complete blindness to the alternative hypothesis: Proposition~\ref{first_Stein_reduction_prop} holds with no assumption on the sequence $\SS = (\SS_n)_n$. However, comparing~\eqref{first_Stein_reduction_composite_iid} with~\eqref{goal_Stein_reduction_composite_iid}, we see that we are far from done yet, as the minimum over $\rho$ is still \emph{inside} the two limits over $n$ and $\e$, meaning that the expression in~\eqref{first_Stein_reduction_composite_iid} is difficult to relate to the right-hand side of~\eqref{goal_Stein_reduction_composite_iid}. We will now attempt to swap minimisation and limits. Something similar was achieved in~\cite[Lemma~12]{doubly-comp-quantum} for the better-behaved relative entropy; here, things are significantly complicated by the fact that we have to deal with a one-shot entropy instead. To tackle the problem, we will need some assumptions on the alternative hypothesis, meaning that our approach ceases to be completely blind to the nature of the latter.

We start from the following slightly stronger reformulation of the `asymptotic quantum blurring lemma' from~\cite{GQSL}. The main difference with~\cite[Lemma~11]{GQSL} is that we allow the base state of which we take as asymptotically large number of i.i.d.\ copies to vary with $n$. Fortunately, at this stage it suffices to look at the first steps of the proof of~\cite[Lemma~11]{GQSL} to see that it can easily adapt to cover also the case of interest here.

To set the stage, we remind the reader that, for some finite-dimensional Hilbert space $\HH$, some $n\in \N^+$, some $\delta\in \R^+$, and a state $\rho\in \D(\HH)$, the \deff{$\boldsymbol{\rho}$-dependent blurring map} is defined as~\cite[Eq.~(88)]{GQSL}
\bb
\widebar{B}_{n,\delta}^{\,\rho} &: \LL\big(\HH^{\otimes n} \big) \to \LL\big(\HH^{\otimes n} \big)\, , \\
\widebar{B}_{n,\delta}^{\,\rho} (X) &\coloneqq \Tr_{\floor{n \delta}} \E_{\pi}\, U_\pi^{\vphantom{\dag}}\big(X \otimes \rho^{\otimes \floor{n \delta}} \big) U_\pi^\dag\, ,
\label{blurring_map}
\ee
where $\pi \in S_{n+\floor{n \delta}}$ is a uniformly random permutation of a set of $n+\floor{n \delta}$ elements, and $U_\pi$ denotes the unitary that implements it on the tensor factors of $\HH^{\otimes (n+\floor{n \delta})}$.

\begin{lemma}[{(Slightly stronger form of the quantum blurring lemma~\cite[Lemma~11]{GQSL})}] \label{stronger_q_blurring_lemma}
Let $(\rho_n)_n$ be a sequence of states $\rho_n\in \D(\HH)$ on a fixed finite-dimensional system. For some infinite set $I\subseteq \N$, let $(\Omega_n)_{n\in I}$ be a sequence of permutationally symmetric $n$-copy states $\Omega_n \in \D\big( \HH^{\otimes n}\big)$ such that
\bb
\limsup_{n\in I} \frac12\, \big\|\, \Omega_n - \rho_n^{\otimes n}\big\|_1 \leq \e
\label{non_orthogonal_stronger_q_blurring}
\ee
for some $\e\in (0,1)$. Then there exists an infinite subset $I'\subseteq I$ such that, for all $\Delta\in (0,\frac12]$,
\bb
\lim_{M\to\infty} \limsup_{n\in I'} \Tr \left( \rho_n^{\otimes n} - M \int_0^{\Delta} \frac{\dd \delta}{\Delta}\ \widebar{B}_{n,\delta}^{\,\rho_n}(\Omega_n) \right)_+ = 0\, ,
\label{stronger_q_blurring}
\ee
where the channel $\widebar{B}_{n,\delta}^{\,\rho_n}$ is defined as in~\eqref{blurring_map}.
\end{lemma}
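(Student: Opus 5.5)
I would follow the proof of~\cite[Lemma~11]{GQSL} and make only one change: the fixed reference state is replaced by $\rho_n$ at every step. Since $\HH$ is finite-dimensional, $\D(\HH)$ is compact. The first step is therefore to pass to an infinite subset $I'\subseteq I$ along which $\rho_n\to\widebar{\rho}$ for some limit state $\widebar{\rho}$. The proof of~\cite[Lemma~11]{GQSL} only uses the spectral structure of the reference state, namely its eigenbasis and its eigenvalues. So I would also refine $I'$ so that the rank of $\rho_n$ is eventually constant, its ordered eigenvalues converge to those of $\widebar{\rho}$, and its eigenvectors, once suitably chosen, converge to an eigenbasis of $\widebar{\rho}$.

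Next, fix $n\in I'$ and write $\rho_n=\sum_i \lambda_{n,i}\ketbra{e_{n,i}}$. I would rotate by the local unitary $V_n^{\otimes n}$ that sends $\ket{e_{n,i}}$ to a fixed basis $\ket{i}$. This conjugation commutes with permutations. It maps $\rho_n^{\otimes n}$ to the diagonal state $\tilde\rho_n^{\otimes n}$, maps $\Omega_n$ to a permutationally symmetric $\tilde\Omega_n$, and intertwines $\widebar{B}^{\,\rho_n}_{n,\delta}$ with $\widebar{B}^{\,\tilde\rho_n}_{n,\delta}$. The trace-norm hypothesis~\eqref{non_orthogonal_stronger_q_blurring} and the quantity in~\eqref{stronger_q_blurring} are both invariant under this rotation. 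Hence I may assume that all $\rho_n$ are diagonal in a common basis, with eigenvalues $\lambda_{n,i}\to\widebar{\lambda}_i$.

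The remaining steps follow the original argument. One looks at the type decomposition of $\tilde\Omega_n$, that is, at its weight on the typical subspaces of the common basis. Hypothesis~\eqref{non_orthogonal_stronger_q_blurring} says that $\tilde\Omega_n$ puts weight at least about $1-\e$ on type classes whose empirical frequencies are within $O(n^{-1/2})$, in the CLT sense, of $\lambda_n$. One then shows that averaging the blurring over $\delta\in[0,\Delta]$ adds $\floor{n\delta}$ copies of $\tilde\rho_n$ and spreads the type distribution enough. After a suitable multiplicative constant $M$, the blurred state dominates $\tilde\rho_n^{\otimes n}$ on all but a vanishing portion of the typical region. In~\cite{GQSL} this is handled by bosonic lifting, in which the reference state corresponds to the vacuum. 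For the present lemma I expect it to be enough to carry out those estimates with $n$-dependent eigenvalues $\lambda_{n,i}$. This works because every bound there depends on the eigenvalues continuously, through Gaussian or multinomial approximations that are uniform on compact sets of eigenvalues bounded away from zero on the support.

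The main obstacle is exactly this uniformity: verifying that the constants in~\cite[Lemma~11]{GQSL}, such as the multinomial tail bounds and the overlap between the spread-out type distributions, stay uniform as the reference state varies. The delicate point is eigenvalues of $\rho_n$ that tend to zero while $\widebar\rho$ has smaller rank. I would handle this by the subsequence choice above. If the rank of $\rho_n$ drops in the limit, I would split off the small eigenvalues and bound their contribution directly, using the fact that the blurring adds copies of $\rho_n$ itself and so populates those directions with exactly the same small weights that $\rho_n^{\otimes n}$ carries. If this proves delicate, the fallback is the dynamical bosonic lifting described in the main text, where the embedding depends on $n$.
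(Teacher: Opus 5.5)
There is a genuine gap. The core of your plan is to rerun the argument of \cite[Lemma~11]{GQSL} with $n$-dependent eigenvalues $\lambda_{n,i}$ and check that every constant stays uniform, including when the rank drops in the limit. That step is only stated as an expectation, not proved. Moreover, the mechanism you sketch for it is essentially classical. After your rotation, $\tilde\rho_n^{\otimes n}$ is diagonal, but $\tilde\Omega_n$ and its blurred version generally carry coherences between different type classes. The quantity $\Tr\big(\tilde\rho_n^{\otimes n}-MX\big)_+$ expresses operator domination, and the weights of $X$ on type classes do not control it. Pinching $X$ onto the eigenspaces of $\tilde\rho_n^{\otimes n}$ can only decrease $\Tr(\tilde\rho_n^{\otimes n}-MX)_+$, so domination by the pinched operator says nothing about $X$ itself. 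For instance, on a qubit, $\Tr(\id/2-M\ketbra{+})_+=1/2$ for every $M$, although the pinched $\ketbra{+}$ equals $\id/2$. This non-commutativity is why the quantum blurring lemma in \cite{GQSL} never works with the spectrum of a mixed reference state. It first reduces to a \emph{pure} reference state, and only then uses bosonic lifting with that pure state as the vacuum.

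The missing idea is to do that reduction first; your own rotation step then finishes the proof with no uniformity issue at all.
\begin{itemize}
\item By \cite[Lemma~29]{GQSL}, purify $\rho_n$ to $\psi_n$ on $AE$, and $\Omega_n$ to $\Phi_n$ on the symmetric subspace of $(AE)^n$, with $\big|\langle\psi_n^{\otimes n}|\Phi_n\rangle\big|\geq 1-\frac12\|\rho_n^{\otimes n}-\Omega_n\|_1$. Then $\limsup_{n\in I}\frac12\|\psi_n^{\otimes n}-\Phi_n\|_1\leq \e'\coloneqq\sqrt{1-(1-\e)^2}<1$.
\item Since $\Tr_{E^n}\widebar{B}^{\,\psi_n}_{n,\delta}(\Phi_n)=\widebar{B}^{\,\rho_n}_{n,\delta}(\Omega_n)$, and $\Tr(\cdot)_+$ cannot increase under partial trace, it suffices to prove the claim for the pair $(\psi_n,\Phi_n)$.
\item All pure states are unitarily equivalent. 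Choosing $V_n$ with $V_n\psi_nV_n^\dag=\ketbra{0}$ sends every reference state to the same fixed $\ketbra{0}$.
\item Hence \cite[Lemma~11]{GQSL} applies verbatim to the permutationally symmetric states $V_n^{\otimes n}\Phi_n(V_n^\dag)^{\otimes n}$ and yields $I'$.
\item The covariance of the blurring map, which you already noted, transports the conclusion back.
\end{itemize}
This route needs no convergence of $\rho_n$, no control of spectra, and no re-examination of constants. With a mixed reference state, rotation can only diagonalise, which is what left you with $n$-dependent spectra and an unverified uniformity claim.
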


\begin{proof}
We can proceed exactly as in~\cite[Section~V.I, proof of Lemma~11]{GQSL} to reduce the problem to the case where $\rho_n$ is pure for all $n$. Indeed, suppose that the claim has been shown in this case, and let us show that it holds in the mixed state case as well. By~\cite[Lemma~29]{GQSL}, for an arbitrary $n\in \N^+$ we can consider two purifications $\ket{\psi_n}_{AE}$ of $\rho_n = \rho_n^A$ and $\ket{\Phi_n}_{A^nE^n}$ of $\Omega_n = \Omega_n^{A^n}$, the latter supported on the fully symmetric space of $A^nE^n = (AE)^n$, so that 
\bb
\Tr_{E^n} \psi_n^{\otimes n} &= \Tr_{E^n} \ketbra{\psi_n}^{\otimes n}_{AE} = \big(\rho_n^{\otimes n}\big)_{A^n}\, ,\\
\Tr_{E^n} \Phi_n &= \Tr_{E^n} \ketbra{\Phi_n}_{A^nE^n} = \Omega_n^{A^n}\! ,
\label{partial_traces_symm_purifications}
\ee
and furthermore
\bb
\big|\braket{\psi_n^{\otimes n}|\Phi_n}\big| \geq 1 - \frac12 \left\|\rho_n^{\otimes n} - \Omega_n\right\|_1 .
\ee
From~\eqref{non_orthogonal_stronger_q_blurring}, we thus get that
\bb
\liminf_{n\in I} \big|\braket{\psi_n^{\otimes n}|\Phi_n}\big| \geq 1 - \limsup_{n\in I} \frac12 \left\|\rho_n^{\otimes n} - \Omega_n\right\|_1 \geq 1-\e\, ;
\ee
because of the well-known expression for the trace distance between pure states, this entails that
\bb
\limsup_{n\in I} \frac12 \left\| \psi_n^{\otimes n} - \Phi_n \right\|_1 = \limsup_{n\in I} \sqrt{1- \big|\braket{\psi_n^{\otimes n} | \Phi_n}\big|^2} \leq \sqrt{1-(1-\e)^2} \eqqcolon \e' \in (0,1)\, .
\label{epsilon_prime}
\ee

Now, exactly as in~\cite[Eq.~(214)]{GQSL}, by tracing away all the $E$ systems it is easy to verify that
\bb
\Tr_{E^n} \widebar{B}^{\,\psi_n}_{n,\delta}(\Phi_n) = \widebar{B}^{\,\rho_n}_{n,\delta} (\Omega_n)\, .
\label{partial_trace_output_blurring}
\ee
Therefore, for any infinite subset $I'\subseteq I$, to be fixed shortly, we have
\bb
&\lim_{M\to\infty} \limsup_{n \in I'} \Tr \left(\rho_n^{\otimes n} - M \int_0^\Delta \frac{\dd\delta}{\Delta}\, \widebar{B}_{n,\delta}^{\,\rho_n}(\Omega_n) \right)_+ \\
&\quad = \lim_{M\to\infty} \limsup_{n \in I'} \Tr\left( \Tr_{E^n} \left[ \psi_n^{\otimes n} - M \int_0^\Delta \frac{\dd\delta}{\Delta}\, \widebar{B}_{n,\delta}^{\,\psi_n}(\Phi_n) \right] \right)_{+} \\
&\quad \leq \lim_{M\to\infty} \limsup_{n \in I'} \Tr \left( \psi_n^{\otimes n} - M \int_0^\Delta \frac{\dd\delta}{\Delta}\, \widebar{B}_{n,\delta}^{\,\psi_n}(\Phi_n) \right)_{+} \\
&\quad =\ 0\, .
\ee
Here, the first equality follows from~\eqref{partial_trace_output_blurring}, the inequality holds by data processing, and the last equality requires the assumption that we have established the result for pure states, for some particular $I'$.

By the above calculation, we can therefore consider, without loss of generality, the case where $\rho_n = \ketbra{\psi_n} = \psi_n\in \D(\HH)$ is pure for all $n$. (We can also assume $\Omega_n$ to be pure, although we will not need this particular fact.) Let us construct a sequence $(V_n)_n$ of unitaries $V_n$ on $\HH$ with the property that $V_n^{\vphantom{\dag}} \psi_n V_n^\dag = \ketbra{0}$ is equal to a fixed pure state $\ketbra{0} \in \D(\HH)$. Defining
\bb
\Omega'_n \coloneqq V_n^{\otimes n} \Omega_n \big(V_n^{\dag}\big)^{\otimes n} ,
\ee
we have, by unitary invariance,
\bb
\limsup_{n\in I} \frac12 \left\|\, \Omega'_n - \ketbra{0}^{\otimes n} \right\|_1 = \limsup_{n\in I} \frac12 \left\|\, \Omega_n - \psi_n^{\otimes n} \right\|_1 \leq \e'\, .
\ee

Since $\Omega'_n$ is permutationally symmetric for all $n$, the quantum blurring lemma~\cite[Lemma~11]{GQSL} ensures the existence of an infinite set $I'\subseteq I$ such that, for all $\Delta\in \big(0,\tfrac12\big]$,
\bb
\lim_{M\to\infty} \limsup_{n\in I'} \Tr \left(\ketbra{0}^{\otimes n} - M \int_0^\Delta \frac{\dd \delta}{\Delta}\, \widebar{B}_{n,\delta}^{\,\ket{0}\!\bra{0}}\left(\Omega'_n\right) \right)_+ = 0\, .
\ee
Again by unitary invariance, we thus obtain that
\bb
0 &= \lim_{M\to\infty} \limsup_{n\in I'} \Tr \left(\big(V_n^\dag\big)^{\otimes n}\ketbra{0}^{\otimes n} V_n^{\otimes n} - M \int_0^\Delta \frac{\dd \delta}{\Delta}\, \big(V_n^\dag\big)^{\otimes n} \widebar{B}_{n,\delta}^{\,\ket{0}\!\bra{0}}\left(\Omega'_n\right) V_n^{\otimes n} \right)_+ \\
&= \lim_{M\to\infty} \limsup_{n\in I'} \Tr \left(\psi_n^{\otimes n} - M \int_0^\Delta \frac{\dd \delta}{\Delta}\, \widebar{B}_{n,\delta}^{\,\psi_n}\left(\Omega_n\right) \right)_+ ,
\ee
where, in the second line, we observed that
\bb
\big(V_n^\dag\big)^{\otimes n} \widebar{B}_{n,\delta}^{\,\ket{0}\!\bra{0}}\left(\Omega'_n\right) V_n^{\otimes n} &= \big(V_n^\dag\big)^{\otimes n} \Tr_{\floor{n \delta}}\left[ \E_{\pi}\, U_\pi^{\vphantom{\dag}}\big(\Omega'_n \otimes \ketbra{0}^{\otimes \floor{n \delta}} \big) U_\pi^\dag \right] V_n^{\otimes n} \\
&= \Tr_{\floor{n \delta}}\left[ \big(V_n^\dag\big)^{\otimes (n+\floor{n \delta})}\, \E_{\pi}\, U_\pi^{\vphantom{\dag}}\big(\Omega'_n \otimes \ketbra{0}^{\otimes \floor{n \delta}} \big) U_\pi^\dag V_n^{\otimes (n+\floor{n \delta})} \right] \\
&= \Tr_{\floor{n \delta}}\left[ \E_{\pi}\, U_\pi^{\vphantom{\dag}}\Big(\big(V_n^\dag\big)^{\otimes (n+\floor{n \delta})}\big( \Omega'_n \otimes \ketbra{0}^{\otimes \floor{n \delta}} \big) V_n^{\otimes (n+\floor{n \delta})} \Big) U_\pi^\dag \right] \\
&= \Tr_{\floor{n \delta}}\left[ \E_{\pi}\, U_\pi^{\vphantom{\dag}}\Big(\Big( \big(V_n^\dag\big)^{\otimes n} \Omega'_n V_n^{\otimes n}\Big) \otimes \big(V_n^\dag \ketbra{0} V_n^{\vphantom{\dag}}\big)^{\otimes \floor{n \delta}} \Big) U_\pi^\dag \right] \\
&= \Tr_{\floor{n \delta}}\left[ \E_{\pi}\, U_\pi^{\vphantom{\dag}}\Big(\Omega_n \otimes \psi_n^{\otimes \floor{n \delta}} \Big) U_\pi^\dag \right] \\
&= \widebar{B}_{n,\delta}^{\,\psi_n}(\Omega_n)\, .
\ee
This completes the proof in the case where $\rho_n$ is pure for all $n$, and, by the above argument, this suffices to establish the claim in general.
\end{proof}

We are finally ready to prove our first main result.

\begin{boxed}
\begin{manualthm}{\ref{black_box_Stein_iid_thm}}
Let $\RR_1\subseteq \D(\HH)$ be any set of states on the finite-dimensional Hilbert space $\HH$. Assume that the sequence $\SS = (\SS_n)_n$ of alternative hypotheses $\SS_n\subseteq \D(\HH^{\otimes n})$ satisfies all five Brand\~{a}o--Plenio axioms. Then
\begin{equation}
    \lim_{n\to\infty} \frac1n\, \rel{D_H^\e}{\co\big(\RR_n^{\iid}\big)}{\SS_n} = \inf_{\rho\in \RR_1} D^\infty(\rho \|\SS)\qquad \forall\ \e\in (0,1)\, ,
\tag{\ref{black_box_Stein_iid_DH}}
\end{equation}
where $\RR_n^{\mathrm{iid}}$ is defined by~\eqref{composite_iid}. Equivalently,
\begin{equation}
\rel{\stein}{\RR^{\mathrm{iid}}}{\SS} = \rel{\stein^\dag}{\RR^{\mathrm{iid}}}{\SS} = \inf_{\rho\in \RR_1} D^\infty(\rho\|\SS) = \inf_{\rho\in \RR_1} \stein(\rho\|\SS)\, .
\tag{\ref{black_box_Stein_iid}}
\end{equation}
\end{manualthm}
\end{boxed}

\begin{proof}
The fact that~\eqref{black_box_Stein_iid_DH} and~\eqref{black_box_Stein_iid} are equivalent is obvious, so, without loss of generality, we focus on~\eqref{black_box_Stein_iid}. Using~\eqref{first_Stein_reduction_composite_iid} and~\eqref{first_Stein_reduction_strong_converse_composite_iid}, and remembering that $\SS_n$ is convex for all $n$ because of the Brand\~{a}o--Plenio axioms, we see that it suffices to show that
\bb
\liminf_{n\to\infty} \frac1n\, \inf_{\rho\in \RR_1} \rel{D_H^\e}{\rho^{\otimes n}}{\SS_n} \geqt{?} \inf_{\rho\in \RR_1} D^\infty(\rho\|\SS) \qquad \forall\ \e\in (0,1)\, .
\label{black_box_Stein_iid_proof_eq1}
\ee
Indeed, with this we would obtain that
\bb
\inf_{\rho\in \SS_1} D^\infty(\rho\|\SS) &\leq \lim_{\e\to 0^+} \liminf_{n\to\infty} \frac1n\, \inf_{\rho\in \RR_1} \rel{D_H^\e}{\rho^{\otimes n}}{\SS_n} \\
&=\, \rel{\stein}{\RR^\iid}{\SS} \\
&\leq\hspace{1pt} \rel{\stein^\dag\!}{\RR^\iid}{\SS} \\
&= \lim_{\e\to 1^-} \limsup_{n\to\infty} \frac1n\, \inf_{\rho\in \RR_1} \rel{D_H^\e}{\rho^{\otimes n}}{\SS_n} \\
&\leq \inf_{\rho \in \RR_1} \lim_{\e\to 1^-} \limsup_{n\to\infty} \frac1n\, \rel{D_H^\e}{\rho^{\otimes n}}{\SS_n} \\
&= \inf_{\rho\in \SS_1} D^\infty(\rho\|\SS) \\
&= \inf_{\rho\in \SS_1} \stein(\rho\|\SS)\, ,
\label{black_box_Stein_iid_proof_eq2}
\ee
where on the second and fourth lines we used~\eqref{first_Stein_reduction_composite_iid} and~\eqref{first_Stein_reduction_strong_converse_composite_iid} from Proposition~\ref{first_Stein_reduction_prop}, and the last two equalities are from (the strong converse form of) the generalised quantum Stein's lemma~\cite{Hayashi-Stein, GQSL}. Provided that we can establish~\eqref{black_box_Stein_iid_proof_eq1}, all inequalities in~\eqref{black_box_Stein_iid_proof_eq2} would actually be equalities, thereby completing the proof.

As usual, to establish~\eqref{black_box_Stein_iid_proof_eq1} we will leverage the weak/strong converse duality~\cite[Theorem~12]{tight-relations}, which gives us the equivalent statement
\bb
\liminf_{n\to\infty} \frac1n \inf_{\rho\in \RR_1} \rel{D_{\max}^\e}{\rho^{\otimes n}}{\SS_n} \geqt{?} \inf_{\rho\in \RR_1} D^\infty(\rho\|\SS) \qquad \forall\ \e\in (0,1)\, .
\ee
We proceed by contradiction. Assume that
\bb
\liminf_{n\to\infty} \frac1n \inf_{\rho\in \RR_1} \rel{D_{\max}^\e}{\rho^{\otimes n}}{\SS_n} < \lambda < \inf_{\rho\in \RR_1} D^\infty(\rho\|\SS)
\label{black_box_Stein_iid_proof_eq4}
\ee
for some $\e\in (0,1)$. This means that there exists an infinite set $J\subseteq \N$ such that
\bb
\inf_{\rho\in \RR_1} \rel{D_{\max}^\e}{\rho^{\otimes n}}{\SS_n} < n\lambda \qquad \forall\ n\in J\, .
\ee
In turn, this entails the existence of a sequence of states $(\rho_n)_{n\in J}$ such that $\rho_n\in \RR_1$ for all $n\in J$, and moreover 
\bb
\rel{D_{\max}^\e}{\rho_n^{\otimes n}}{\SS_n} < n\lambda \qquad \forall\ n\in J\, .
\ee
Using the definition of smooth max-relative entropy, we can now obtain states $\Omega_n\in \D\big(\HH^{\otimes n}\big)$ satisfying
\bb
\frac12 \left\|\,\Omega_n - \rho_n^{\otimes n}\right\|_1 \leq \e\, ,\qquad D_{\max}(\Omega_n\|\SS_n) \leq n\lambda\qquad \forall\ n\in J\, .
\label{black_box_Stein_iid_proof_eq7}
\ee
Due to the fact that $\SS_n$ is closed under permutations of the systems by the Brand\~ao--Plenio axioms, we can assume that $\Omega_n$ is permutationally symmetric for all $n\in J$.

Using the compactness of the set of states, we can also extract from $(\rho_n)_{n\in J}$ a sub-sequence $(\rho_n)_{n\in I}$, where $I\subseteq J \subseteq \N$ is also infinite, such that $\rho_n \tends{}{n\in I\,} \widebar{\rho} \in \widebar{\RR}_1$, with $\widebar{\RR}_1$ being the topological closure of $\RR_1$. Since 
\bb
\limsup_{n\in I} \frac12\, \big\|\, \Omega_n - \rho_n^{\otimes n}\big\|_1 \leq \limsup_{n\in J} \frac12\, \big\|\, \Omega_n - \rho_n^{\otimes n}\big\|_1 \leq \e\, ,
\ee
we can apply Lemma~\ref{stronger_q_blurring_lemma}, which establishes the existence of an infinite set $I'\subseteq I \subseteq J$ such that~\eqref{stronger_q_blurring} holds for all $\Delta\in \big(0,\tfrac12\big]$. 

Now, fix some $\Delta\in \big(0,\tfrac12\big]$ and, for an arbitrary $\delta \in [0,\Delta]$, consider the modified blurring map
\bb
B_{n,\delta} \coloneqq \widebar{B}_{n,\delta}^{\,\tau}\, ,
\label{modified_blurring}
\ee
where the right-hand side is defined according to~\eqref{blurring_map}, and $\tau\in \D(\HH)$ is a full-rank state satisfying
\bb
\tau \geq c\id > 0\, ;
\ee
the existence of such a state, naturally, is guaranteed by the Brand\~ao--Plenio axioms. The same axioms also ensure that $B_{n,\delta}$ is a resource non-generating map, in the sense that
\bb
B_{n,\delta}(\SS_n) \subseteq \SS_n\, .
\label{black_box_Stein_iid_proof_eq11}
\ee
Also, since the state $\tau$ appears exactly $\floor{n \delta} \leq n \Delta$ times in the action of $B_{n,\delta}$, employing the operator lower bound $\tau \geq c \rho_n$ on each of those states shows that
\bb
B_{n,\delta} \succeq c^{n \Delta} \widebar{B}^{\,\rho_n}_{n,\delta}
\ee
with respect to the completely positive ordering. Plugging this into~\eqref{stronger_q_blurring} and using the monotonicity of the trace of the positive part with respect to the positive semi-definite ordering yields 
\bb
\lim_{M\to\infty} \limsup_{n\in I'} \Tr \left( \rho_n^{\otimes n} - \frac{M}{c^{n \Delta}}\, \widetilde{\Omega}_n \right)_+ = 0\, ,\qquad \widetilde{\Omega}_n \coloneqq \int_0^{\Delta} \frac{\dd \delta}{\Delta}\ B_{n,\delta}(\Omega_n)\, .
\ee
Due to~\eqref{black_box_Stein_iid_proof_eq11}, the max-relative entropy distance from $\SS_n$ is monotonically non-increasing under $B_{n,\delta}$; using also~\eqref{black_box_Stein_iid_proof_eq7}, we see that
\bb
\Rel{D_{\max}}{\widetilde{\Omega}_n}{\SS_n} \leq \rel{D_{\max}}{\Omega_n}{\SS_n} \leq n\lambda \qquad \forall\ n\in I'\, .
\label{black_box_Stein_iid_proof_eq14}
\ee
For all fixed $M$ and all sufficiently large $n\in I'$, we can also write $M\leq n$, which leads us to
\bb
\limsup_{n\in I'} \Tr \left( \rho_n^{\otimes n} - \frac{n}{c^{n \Delta}}\, \widetilde{\Omega}_n \right)_+ = 0\, .
\label{black_box_Stein_iid_proof_eq15}
\ee 

We can now proceed for a while essentially mimicking the proof of the generalised quantum Stein's lemma using the asymptotic quantum blurring lemma, reported in~\cite[Section~V.C]{GQSL}. Pick some $\eta>0$. Eq.~\eqref{black_box_Stein_iid_proof_eq15} implies that
\bb
\rel{\widetilde{D}_{\max}^{\eta^2}}{\rho_n^{\otimes n}}{\widetilde{\Omega}_n} \leq n \Delta \log \tfrac1c + \log n
\label{black_box_Stein_iid_proof_eq16}
\ee
for all sufficiently large $n\in I'$, where the Datta--Leditzkly smooth max-relative entropy is defined by
\bb
\widetilde{D}_{\max}^{\zeta}(\rho\|\sigma) \coloneqq \inf\left\{ \lambda:\ \Tr\big(\rho - \exp[\lambda]\, \sigma\big)_+ \leq \zeta \right\} .
\ee
The Datta--Renner lemma, originally due to~\cite[Lemma~3]{Datta2009} (see also~\cite[Lemma~C.3]{Brandao2010}) and recently sharpened in~\cite{tight-relations}, gives us
\bb
D_{\max}^{\eta}(\rho\|\sigma) \leq \widetilde{D}_{\max}^{\eta^2}(\rho\|\sigma) + \log\tfrac{1}{1-\eta^2}\, ,
\label{black_box_Stein_iid_proof_eq18}
\ee
as one can verify immediately by substituting $\e\mapsto \eta^2$ into the first line of~\cite[Corollary~8, Eq.~(56)]{tight-relations}. By chaining~\eqref{black_box_Stein_iid_proof_eq16} and~\eqref{black_box_Stein_iid_proof_eq18} (with the substitutions $\rho \mapsto \rho_n^{\otimes n}$ and $\sigma \mapsto \widetilde{\Omega}_n$), we derive
\bb
\rel{D_{\max}^{\eta}}{\rho_n^{\otimes n}}{\widetilde{\Omega}_n} \leq n \Delta \log \tfrac1c + \log n + \log \tfrac{1}{1-\eta^2}\, ,
\label{black_box_Stein_iid_proof_eq19}
\ee
which is valid for all sufficiently large $n\in I'$. Using the chain rule for the max-relative entropy, we arrive at the inequality
\bb
\rel{D_{\max}^\eta}{\rho_n^{\otimes n}}{\SS_n} &\leq \rel{D_{\max}^{\eta}}{\rho_n^{\otimes n}}{\widetilde{\Omega}_n} + \Rel{D_{\max}}{\widetilde{\Omega}_n}{\SS_n} \\
&\leq n \Delta \log \tfrac1c + \log n + \log \tfrac{1}{1-\eta^2} + n\lambda\, ,
\label{black_box_Stein_iid_proof_eq20}
\ee
again valid for all sufficiently large $n\in I'$, and where, on the second line, we combined~\eqref{black_box_Stein_iid_proof_eq19} (for the first term) and~\eqref{black_box_Stein_iid_proof_eq14} (for the second). Lower bounding the max-relative entropy with the Umegaki relative entropy and using the asymptotic continuity of the relative entropy distance $D(\cdot\|\SS_n)$, due to~\cite{Alicki-Fannes, tightuniform} and reported in~\cite[Lemma~13]{GQSL}, yields, precisely as in~\cite[Eq.~(14)]{GQSL},
\bb
n \Delta \log \tfrac1c + \log n + \log \tfrac{1}{1-\eta^2} + n\lambda \geq \rel{D}{\rho_n^{\otimes n}}{\SS_n} - n\eta \log\tfrac1c - g(\eta)\, ,
\ee
where $g(x) \coloneqq (x+1)\log(x+1) - x \log x$. Dividing by $n$ and re-arranging gives
\bb
\lambda + (\eta + \Delta) \log\tfrac1c + \frac1n \left( \log n + g(\eta) + \log\tfrac{1}{1-\eta^2}\right) \geq \frac1n\, \rel{D}{\rho_n^{\otimes n}}{\SS_n}\, .
\label{black_box_Stein_iid_proof_eq21}
\ee

The last tricky step comes now: in the original proof in~\cite{GQSL}, there is nothing else to do, as the right-hand side features a fixed state $\rho$ and thus converges to the regularised relative entropy distance $D^\infty(\rho\|\SS)$. Here, instead, we have to take care of the fact we have $\rho_n^{\otimes n}$ instead of $\widebar{\rho}^{\,\otimes n}$; plain asymptotic continuity cannot help us now, because, although $\rho_n \tends{}{n\in I'\,} \widebar{\rho}$, in general $\frac12 \left\|\rho_n^{\otimes n} - \widebar{\rho}^{\,\otimes n} \right\|_1$ does not tend to zero along $n\in I'$. To remedy this, we need to use a stronger continuity property of the relative entropy distance from $\SS$, namely, the fact that
\bb
\frac1n \left| \rel{D}{\rho^{\otimes n}}{\SS_n} - \rel{D}{\omega^{\otimes n}}{\SS_n} \right| \leq \xi \log \tfrac1c + g(\xi)
\label{black_box_Stein_iid_proof_eq22}
\ee
provided that $\frac12 \|\rho - \omega\|_1\leq \xi$. The history of~\eqref{black_box_Stein_iid_proof_eq22} is somewhat long. The basic technique is due to Alicki and Fannes~\cite{Alicki-Fannes}, and Christandl figured out a way to apply it to prove the continuity of the regularised relative entropy of entanglement~\cite[Proposition~3.23]{MatthiasPhD}. (A slight generalisation of the proof by Christandl to the case of arbitrary resources obeying the Brand\~ao--Plenio axioms would already suffice to recover an inequality that, while weaker than~\eqref{black_box_Stein_iid_proof_eq22}, would be asymptotically strong enough for us.) More than a decade later, a refined version of the basic Alicki--Fannes technique was put forth by Winter, who used it to obtain a cleaner proof of the continuity of the regularised relative entropy of entanglement, with tighter bounds~\cite[Corollary~8]{tightuniform}. It is not difficult to take his proof and generalise it to arbitrary resources under the Brand\~ao--Plenio axioms, as was done in~\cite[Proof of Lemma~12]{doubly-comp-quantum}. Doing so gives precisely~\eqref{black_box_Stein_iid_proof_eq22} (cf.,\ for instance,~\cite[Eq.~(107)]{doubly-comp-quantum}).

We can now use~\eqref{black_box_Stein_iid_proof_eq22}, with the substitutions $\rho\mapsto \widebar{\rho}$, $\omega \mapsto \rho_n$, and 
\bb
\xi \mapsto \xi_n \coloneqq \frac12 \left\|\widebar{\rho} - \rho_n\right\|_1 \tends{}{n\in I'\,} 0\, ,
\label{black_box_Stein_iid_proof_eq23}
\ee
to continue from~\eqref{black_box_Stein_iid_proof_eq21} and write
\bb
\lambda + (\eta + \Delta) \log\tfrac1c + \frac1n \left( \log n + g(\eta) + \log\tfrac{1}{1-\eta^2}\right) &\geq \frac1n\, \rel{D}{\rho_n^{\otimes n}}{\SS_n} \\
&\geq \frac1n\, \rel{D}{\widebar{\rho}^{\,\otimes n}}{\SS_n} - \xi_n \log\tfrac1c - g(\xi_n)\, .
\ee
We are finally ready to take the limit along $n\in I'$, which yields, due to~\eqref{black_box_Stein_iid_proof_eq23},
\bb
\lambda + (\eta + \Delta) \log\tfrac1c \geq D^\infty(\widebar{\rho}\|\SS) \geq \inf_{\rho \in \widebar{\RR}_1} D^\infty(\rho\|\SS) = \inf_{\rho \in \RR_1} D^\infty(\rho\|\SS)\, ,
\label{black_box_Stein_iid_proof_eq25}
\ee
where, for the last equality, we used the continuity of the function $D^\infty(\cdot\|\SS)$, which follows directly from~\eqref{black_box_Stein_iid_proof_eq22}, to replace the topological closure $\widebar{\RR}_1$ with $\RR_1$ itself. Since $\eta,\Delta>0$ can be made arbitrarily small, we can now take them to zero. Doing so gives
\bb
\lambda \geq \inf_{\rho \in \RR_1} D^\infty(\rho\|\SS)\, ,
\ee
in blatant contradiction with~\eqref{black_box_Stein_iid_proof_eq4}.
\end{proof}

A notable consequence of the above result is the following result, which underlies our analysis of universal quantum resource transformations in the main text.

\begin{cor} \label{continuity_Stein_iid_cor}
Let $\HH$ be a finite-dimensional Hilbert space, and let $\SS = (\SS_n)_n$ be a sequence of sets of states $\SS_n\subseteq \D(\HH^{\otimes n})$ that satisfies all five Brand\~{a}o--Plenio axioms. For some $\rho\in \D(\HH)$, denote with $B_\e(\rho)$ the set of states at trace norm distance at most $\e$ from $\rho$. Then
\bb
\lim_{\e\to 0^+} \rel{\stein}{B_\e(\rho)^\iid}{\SS} = \stein(\rho\|\SS) = D^\infty(\rho\|\SS)\, .
\ee
\end{cor}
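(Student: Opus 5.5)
The plan is to apply Theorem~\ref{black_box_Stein_iid_thm} with the choice $\RR_1 = B_\e(\rho)$, which is legitimate because that theorem holds for an arbitrary base set. This gives
\bb
\rel{\stein}{B_\e(\rho)^\iid}{\SS} = \inf_{\omega\in B_\e(\rho)} D^\infty(\omega\|\SS)
\ee
for every $\e\in(0,1)$. The second equality in the claim, $\stein(\rho\|\SS)=D^\infty(\rho\|\SS)$, is the generalised quantum Stein's lemma~\eqref{GQSL}. Alternatively, it is the special case $\RR_1=\{\rho\}$ of Theorem~\ref{black_box_Stein_iid_thm}. It therefore remains to show that $\inf_{\omega\in B_\e(\rho)} D^\infty(\omega\|\SS)\to D^\infty(\rho\|\SS)$ as $\e\to 0^+$.

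For the upper bound, note that $\rho\in B_\e(\rho)$ for every $\e$. Hence the infimum is at most $D^\infty(\rho\|\SS)$ for all $\e$.

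For the lower bound, I will use the uniform continuity estimate~\eqref{black_box_Stein_iid_proof_eq22}. Take any $\omega\in B_\e(\rho)$. Since the trace-norm radius convention may be either $\|\cdot\|_1\le\e$ or $\tfrac12\|\cdot\|_1\le\e$, in both cases we have $\tfrac12\|\rho-\omega\|_1\le\e$. Applying~\eqref{black_box_Stein_iid_proof_eq22} for each $n$ gives
\bb
\tfrac1n \rel{D}{\omega^{\otimes n}}{\SS_n} \geq \tfrac1n \rel{D}{\rho^{\otimes n}}{\SS_n} - \e\log\tfrac1c - g(\e)\,.
\ee
Letting $n\to\infty$ (the regularised limits exist by superadditivity/subadditivity under the Brand\~ao--Plenio axioms) yields
\bb
D^\infty(\omega\|\SS)\ge D^\infty(\rho\|\SS)-\e\log\tfrac1c-g(\e)\,,
\ee
uniformly over $\omega\in B_\e(\rho)$. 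Taking the infimum over $\omega$ and then the limit $\e\to0^+$, using $g(\e)\to 0$, gives the matching lower bound.

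The main potential obstacle is ensuring that~\eqref{black_box_Stein_iid_proof_eq22} holds with a constant $c$ independent of $n$ and of the states involved. Here $c$ is the minimal eigenvalue of the full-rank free state $\tau$ guaranteed by the axioms, so it depends only on $\SS$. This is exactly the input already used at the end of the proof of Theorem~\ref{black_box_Stein_iid_thm}, so the corollary follows with no further work.
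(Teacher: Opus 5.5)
Your proposal is correct and follows the paper's own route. The paper states the corollary as a direct consequence of Theorem~\ref{black_box_Stein_iid_thm} with $\RR_1 = B_\e(\rho)$, combined with the generalised quantum Stein's lemma and the continuity of $D^\infty(\cdot\|\SS)$ that follows from the uniform bound~\eqref{black_box_Stein_iid_proof_eq22}. One small slip: the regularised limit exists because $n\mapsto D(\rho^{\otimes n}\|\SS_n)$ is \emph{subadditive} (by closure of $\SS$ under tensor products), not superadditive, though working with $\liminf$ would make this irrelevant.
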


\end{document}